\newcommand{\ve}[1]{{\bf #1}}
\newcommand{\mode}{\mbox{mod }}
\newcommand{\vol}{\mbox{Vol }}
\newcommand{\norm}[1]{\left|\left|#1\right|\right|}
\newtheorem{lemma}{Lemma}
\newtheorem{theorem}{Theorem}
\newtheorem{proposition}{Proposition}
\newtheorem{definition}{Definition}
\DeclarePairedDelimiter{\ceil}{\lceil}{\rceil}
\ifCLASSOPTIONcompsoc \usepackage[caption=false,font=normalsize,labelfon t=sf,textfont=sf]{subfig} \else \usepackage[caption=false,font=footnotesize]{subfi
g} \fi
\begin{document}

\sloppy
\IEEEoverridecommandlockouts
\title{Lattice Codes for Many-to-One  Interference Channels With and Without Cognitive Messages}

\author{Jingge Zhu and Michael Gastpar\IEEEmembership{, Member, IEEE}

\thanks{ Copyright (c) 2014 IEEE. Personal use of this material is permitted.  However, permission to use this material for any other purposes must be obtained from the IEEE by sending a request to pubs-permissions@ieee.org.}

\thanks{This work was supported in part by the European ERC Starting Grant 259530-ComCom. This paper was presented in part in IEEE International Symposium on Information Theory 2013, Istanbul, Turkey. }
\thanks{J. Zhu is with the School of Computer and Communication Sciences, Ecole Polytechnique F{\'e}d{\'e}rale de Lausanne (EPFL), Lausanne,
Switzerland (e-mail: jingge.zhu@epfl.ch).}

\thanks{M. Gastpar is with the School of Computer and Communication Sciences, Ecole Polytechnique F{\'e}d{\'e}rale de Lausanne (EPFL), Lausanne, Switzerland and the Department of Electrical Engineering and Computer Sciences, University of California, Berkeley, CA, USA (e-mail: michael.gastpar@epfl.ch).}

}



\maketitle

\begin{abstract}

A new achievable rate region is given for the Gaussian cognitive many-to-one  interference channel. The proposed novel coding scheme is based on the compute-and-forward approach with lattice codes. Using the idea of decoding sums of codewords, our scheme improves considerably upon the conventional coding schemes which treat interference as noise or decode messages simultaneously.   Our strategy also extends directly to the usual many-to-one interference channels without cognitive messages. Comparing to the usual compute-and-forward scheme where a fixed lattice is used for the code construction, the novel scheme employs scaled lattices and also encompasses key ingredients of the existing  schemes for the cognitive interference channel. With this new component, our scheme achieves a larger rate region in general. For some symmetric channel settings, new constant gap or capacity results are established, which are independent of the number of users in the system.

\end{abstract}

\section{Introduction}
Recently, with growing requests on high data rate and increasing numbers of intelligent communication devices, the concept of \textit{cognitive radio} has been intensively studied to boost  spectral efficiency. As one of its information-theoretic abstractions, a model of the cognitive radio channel of two users was proposed and analyzed in \cite{Devroye_etal_2006}, \cite{maric_capacity_2008}, \cite{jovicic_cognitive_2009}. In this model, the cognitive user
is assumed to know the message of the primary user non-causally
before transmissions take place.  The capacity region of this channel with additive white Gaussian noise is known for most of the
parameter region, see for example \cite{Rini_etal_2012} for an overview of the results.

In this work we extend this cognitive radio channel model to
include many cognitive users. We consider the simple many-to-one
interference scenario  with $K$ cognitive users illustrated in Figure \ref{fig:system}. The message $W_0$ (also called the \textit{cognitive message}) of the primary user is given to all other $K$ users, who could help the transmission of the primary user.  

Existing coding schemes for the cognitive interference channel exploit the usefulness of  cognitive messages. For the case $K=1,$ i.e., a single cognitive user,
the strategy consists in letting the cognitive user spend part of its resources to help the transmission of this message to the primary receiver. 
At the same time, this also appears as interference at the cognitive receiver. But dirty-paper coding can be used at the cognitive transmitter to cancel (part of) this interference.
A new challenge arises when there are many cognitive users. The primary user now benefits from the help of all cognitive users, but at the same time suffers from their collective interference because cognitive users are also transmitting their own messages. This inherent tension is more pronounced when the channels from cognitive transmitters to the primary receiver are strong. In the existing coding scheme, the interference from cognitive users is either decoded or treated as noise at the primary receiver.  As we will show later, direct extensions of these strategies to the many-to-one channel have significant shortcomings, especially when the interference is relatively strong.

The main contribution of this paper is a novel coding strategy for the cognitive interference network based on lattice codes. This scheme is based on the 
compute-and-forward approach (\cite{NazerGastpar_2011} \cite{ZhuGastpar_2014}). It deals with interference in a beneficial fashion, enabling some degree of  reconciliation between the competing factors mentioned above.  While most of the compute-and-forward work considers a fixed lattice to be used at each transmitter, the strategy developed here employs scaled lattices. In general it achieves larger rates than using fixed lattices and permits us to derive constant gap and capacity results.  We can also observe that the novel coding strategy encompasses several key ingredients of the existing coding schemes, such as rate splitting, dirty-paper coding and successive interference cancellation. The performance of the novel coding strategy is analyzed in detail.  We show our scheme outperforms conventional
coding schemes. The advantage is most notable in the case of strong interference from the cognitive users to the primary receiver. The proposed scheme applies naturally to the usual many-to-one interference channel, where the messages are not shared between users.   Applying the proposed scheme to a symmetric channel setting, we can show that under certain channel conditions, the novel coding strategy is near-optimal (in a constant-gap sense) or optimal regardless of the number of cognitive users.

The basic idea of the proposed scheme  is that instead of decoding its message directly, the primary
decoder first recovers enough linear combinations of messages
and then extracts its intended message.  Lattice codes are well suited for this purpose because their linear structure matches the additivity of the channels. More specifically, when two  codewords are superimposed additively, the resulting sum still lies in the lattice.  To give an intuitive explanation as to why this property is beneficial in the  interference channel, we note that as a general rule of thumb, the idea of interference alignment is needed in an interference network. However, using structured codes \textit{is} a form of interference alignment. When the interfering codewords are summed up linearly by the channels, the interference signal (more precisely, the sumset of the interfering codewords)  seen by the undesired receiver is much ``smaller'' when structured codes are used than when the codewords are chosen randomly. Hence the interference is ``aligned'' due to the linear structure of the codebook. This  property gives  powerful interference mitigation ability at the signal level.

Similar systems have been studied in the literature. For the case $K=2$, the system under consideration is studied in \cite{nagananda_multiuser_2013}.  A similar cognitive interference channel with  so-called cumulative message sharing is also studied in \cite{maamari_approximate_2014} where each cognitive user has messages of multiple users. We note that those existing results have not exploited the possibility of using structured codes in cognitive interference networks. The many-to-one channel without cognitive message is studied in \cite{Bresler_etal_2010}, where a similar idea of aligning interference based on lattice codes was used. We also point out that the method of compute-and-forward is versatile and  beneficial in many network scenarios. For example it has been used in \cite{Nam_etal_2010}, \cite{wilson_joint_2010} to study the Gaussian two-way relay  channel,  in \cite{ordentlich_approximate_2012} to study the $K$-user symmetric interference channel and in  \cite{Zhan_etal_2010}  to study the  multiple-antenna system.

The paper is organized as follows. Section \ref{sec:SystemModel}  introduces the system model and the problem statement. Section \ref{sec:ConventionalCoding} extends the known coding schemes from the two-user cognitive channel to the many-to-one cognitive channel. A novel coding scheme is proposed in Section \ref{sec:ComputeAndForward} where we also discuss its features in details. In Section \ref{sec:non-cognitive} we specialize our coding scheme to an interesting special case: the standard many-to-one interference channel without cognitive messages. We choose to present the cognitive channel first because it is more general and the results of the non-cognitive channel are absorbed in the former case.  

We use the notation $[a:b]$ to denote a set of increasing integers $\{a,a+1,\ldots,b\}$, $\log$ to denote $\log_2$ and $\log^+(x)$, $[x]^+$ to denote the function $\max\{\log(x),0\}, \max\{x,0\}$, respectively. We  use $\bar x$ for $1-x$ to lighten the notation at some places. We also adopt the convention that the sum $\sum_{i=m}^nx_i$ equals zero if $m>n$.

\section{System Model and problem statement}\label{sec:SystemModel}

We consider a multi-user channel consisting of $K+1$ transmitter-receiver pairs as shown in Figure \ref{fig:system}. The real-valued channel has the following vector representation:
\begin{IEEEeqnarray}{rCl}
\ve y_0 & = &\ve x_0+\sum_{k=1}^{K}b_k\ve x_k+\ve z_0,\\
\ve y_k & = & h_k\ve x_k+\ve z_k,\quad k\in[1:K],
\end{IEEEeqnarray}
where $\ve x_k$, $\ve y_k\in\mathbb R^n$ denote the channel input and output of the transmitter-receiver pair $k$, respectively. The noise $\ve z_k\in\mathbb R^n$ is assumed to be i.i.d.  Gaussian with zero mean and unit variance for each entry. Let $b_k\geq 0$ denote the channel gain from  Transmitter $k$ to the Receiver $0$ and $h_k$ denote the direct channel gain from Transmitter $k$ to its corresponding receiver for $k\in[1:K]$. We assume a unit channel gain for the first user without loss of generality. This system is sometimes referred to as the \textit{many-to-one interference channel} (or \textit{many-to-one channel} for simplicity), since only  Receiver $0$ experiences interference from other transmitters.

We assume that all  users have the same power constraint, i.e., the channel input $\ve x_k$ is subject to the power constraint 
\begin{align}
\mathbb E\{\norm{\ve x_k}^2\}\leq nP,\quad k\in[1:0].
\end{align}
Since channel gains are arbitrary, this assumption is without loss of generality. We also assume that all transmitters and receivers know their own channel coefficients; that is,   $b_k, h_k$ are known at  Transmitter $k$, $h_k$ is known at Receiver $k$, and $b_k, k\geq 1$ are known at Receiver $0$.

 \begin{figure}[htbp!]
   \centering
	\includegraphics[scale=0.48]{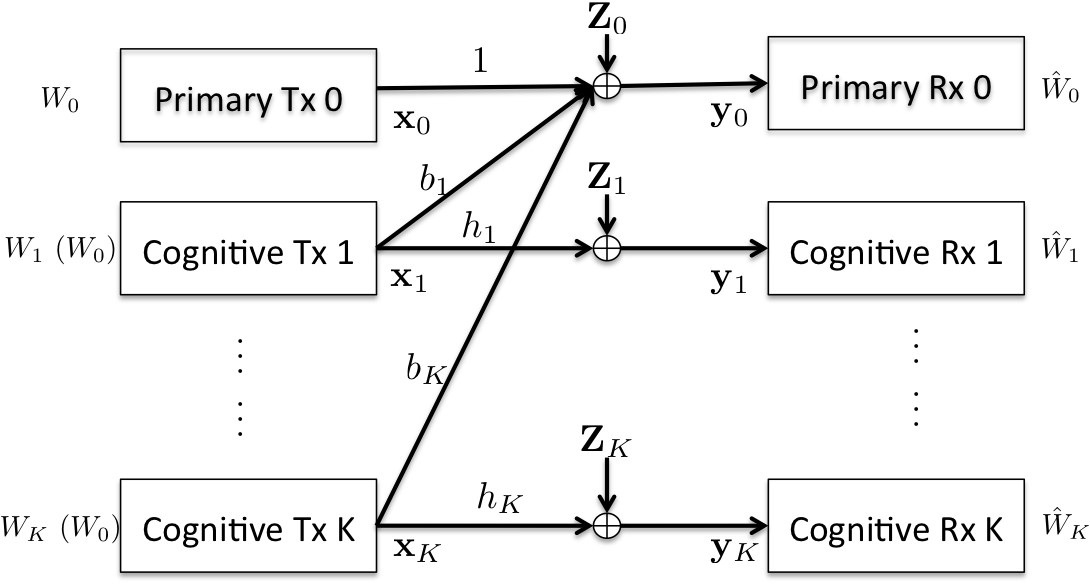} 
   \caption{A many-to-one interference channel. The message of the first user $W_0$ (called \textit{cognitive message}) may or may not be present at  other user's transmitter. }
   \label{fig:system}
 \end{figure}

Now we introduce two variants of this channel according to different message configurations.

\begin{definition}[Cognitive many-to-one channel]
User $0$ is called the primary user and User $k$ a cognitive user (for $k\geq 1$). Each user has a message $W_k$ from a set $\mathcal W_k$ to send to its corresponding receiver. Furthermore, all the cognitive users also have access to the primary user's message $W_0$ (also called cognitive message).
\end{definition}

\begin{definition}[Non-cognitive many-to-one channel]
Each user $k, k\in[0:K]$ has a message $W_k$ from a set $\mathcal W_k$ to send to its corresponding receiver. The messages are not shared among users.
\end{definition}

For the cognitive many-to-one channel, each transmitter has an encoder $\mathcal E_k: \mathcal W_k\rightarrow \mathbb R^n$ which maps the message to its channel input as 
\begin{IEEEeqnarray}{rCl}
\ve x_0&=&\mathcal E_k(W_0)\\
\ve x_k&=&\mathcal E_k(W_k,W_0),\quad k\in[1:K].
\end{IEEEeqnarray}

Each receiver has a decoder $\mathcal D_k:\mathbb R^n\rightarrow \mathcal W_k$ which estimates message  $\hat W_k$ from  $\ve y_k$ as 
\begin{align}
\hat W_k=\mathcal D_k(\ve y_k),\quad k\in[1:K].
\end{align}
The rate of each user is 
\begin{align}
R_k=\frac{1}{n}\log |\mathcal W_k|
\end{align}
under the average error probability requirement
\begin{align}
\mbox{Pr}\left(\bigcup_{k=0}^{K} \{\hat W_k\neq W_k\}\right)\rightarrow \epsilon
\end{align}
for any $\epsilon>0$.

For the non-cognitive many-to-one channel, the encoder takes the form
\begin{IEEEeqnarray}{rCl}
\ve x_k&=&\mathcal E_k(W_k),\quad k\in[0:K]
\end{IEEEeqnarray}
and other conditions are the same as in the cognitive channel.

As mentioned earlier, we find it convenient to first treat the  general model---the  cognitive many-to-one channel where  we derive a novel coding scheme which outperforms conventional strategies. We will show that the coding scheme for the cognitive channel can be extended straightforwardly to the non-cognitive channel, which also gives new results for this channel.

\section{Extensions of Conventional Coding Schemes}\label{sec:ConventionalCoding}
In this section we revisit existing coding schemes for the two-user cognitive interference channel and extend them to our cognitive many-to-one channel. The extensions are straightforward from the schemes proposed for the two-user cognitive channel in, for example, \cite{Devroye_etal_2006}, \cite{wu_capacity_2007} and \cite{Rini_etal_2012} .  Throughout the paper, many schemes can be parametrized by letting cognitive transmitters split their power.  For each cognitive user, we introduce a power splitting parameter $0\leq \lambda_k\leq 1$. For convenience, we also define the vector $\underline{\lambda}:=\{\lambda_1,\ldots,\lambda_K\}$.

%

In the first coding scheme, the cognitive users split the power and use part of it to transmit the message of the primary user. Luckily this part of the signal will not cause interference to the cognitive receiver since it can be completely canceled out using dirty-paper coding (DPC). We briefly describe the random coding argument for this coding scheme:
\begin{itemize}
\item \textbf{Primary encoder.} For each possible message $W_0$, User $0$ generates a codeword $\ve x_0$ with  i.i.d. entries  according to the Gaussian distribution $\mathcal N(0,P)$.
\item \textbf{Cognitive encoders.} User $k$ generates a sequence $\ve{\hat x}_k$ with i.i.d. entry according to the Gaussian distribution $\mathcal  N(0,\bar\lambda_kP)$ for any given $\lambda_k$ and form
\begin{align}
\ve u_k=h_k\ve{\hat x}_k+\gamma h_k \sqrt{\lambda_k}\ve x_0
\end{align}
with $\gamma=\bar\lambda_k h_k^2P/(1+\bar\lambda_k h_k^2 P)$, $k\geq 1$. The channel input is given by
\begin{IEEEeqnarray}{rCl}
\ve x_k=\sqrt{\lambda_k}\ve x_0+\ve{\hat x}_k,\quad k\in[1:K].
\end{IEEEeqnarray}
\item \textbf{Primary decoder.} Decoder $0$ decodes $\ve x_0$ from $\ve y_0$  using typicality decoding.
\item \textbf{Cognitive decoders.} Decoder $k$ ($k\geq 1$) decodes $\ve { u}_k$ from $\ve y_k$  using typicality decoding.
\end{itemize}
This coding scheme gives the following achievable rate region.
\begin{proposition}[DPC]
For the cognitive many-to-one channel, the above dirty paper coding scheme achieves the rate region:
\begin{IEEEeqnarray}{rCl}
R_0&\leq& \frac{1}{2}\log\left(1+\frac{(\sqrt{P}+\sum_{k\geq 1} b_k\sqrt{\lambda_k P})^2}{\sum_{k\geq 1} b_k^2\bar\lambda_k P+1}\right)\\
R_k&\leq& \frac{1}{2}\log\left(1+\bar\lambda_k h_k^2 P\right),\quad k\in[1:K]
\end{IEEEeqnarray}
for any power-splitting parameter $\underline\lambda$.
\label{prop:DPC}
\end{proposition}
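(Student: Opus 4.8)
The plan is to decouple the analysis into independent point-to-point problems---one at the primary receiver and one at each cognitive receiver---and to invoke standard single-user achievability for each. First I would check admissibility: since $\ve x_k=\sqrt{\lambda_k}\ve x_0+\ve{\hat x}_k$ with $\ve x_0$ and $\ve{\hat x}_k$ independent of per-letter variances $P$ and $\bar\lambda_kP$, the per-letter variance of $\ve x_k$ is $\lambda_kP+\bar\lambda_kP=P$, so every input meets the power constraint.

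For the primary receiver I would substitute the cognitive inputs into the channel equation and collect the terms multiplying $\ve x_0$:
\begin{IEEEeqnarray}{rCl}
\ve y_0 & = & \Bigl(1+\sum_{k\geq1}b_k\sqrt{\lambda_k}\Bigr)\ve x_0+\sum_{k\geq1}b_k\ve{\hat x}_k+\ve z_0.
\end{IEEEeqnarray}
The key observation is that the residual term $\sum_{k\geq1}b_k\ve{\hat x}_k+\ve z_0$ is a sum of independent Gaussians, hence itself i.i.d.\ Gaussian with per-letter variance $\sum_{k\geq1}b_k^2\bar\lambda_kP+1$, and independent of $\ve x_0$. Thus Decoder $0$ faces an effective point-to-point AWGN channel whose signal power is $\bigl(1+\sum_{k\geq1}b_k\sqrt{\lambda_k}\bigr)^2P=\bigl(\sqrt{P}+\sum_{k\geq1}b_k\sqrt{\lambda_kP}\bigr)^2$. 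Standard AWGN achievability with typicality decoding, treating the aggregate as noise, then yields exactly the stated bound on $R_0$.

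For each cognitive receiver I would recognize the decoding of $\ve u_k$ as an instance of Costa's writing-on-dirty-paper problem. Writing $\ve y_k=h_k\ve{\hat x}_k+h_k\sqrt{\lambda_k}\ve x_0+\ve z_k$, the term $h_k\sqrt{\lambda_k}\ve x_0$ is interference known non-causally at Transmitter $k$ (it is built from the cognitive message $W_0$), while $\ve{\hat x}_k$ carries the private message at power $h_k^2\bar\lambda_kP$ against unit-variance noise. The auxiliary $\ve u_k=h_k\ve{\hat x}_k+\gamma h_k\sqrt{\lambda_k}\ve x_0$ is precisely the Gelfand--Pinsker auxiliary, and the prescribed $\gamma=\bar\lambda_kh_k^2P/(1+\bar\lambda_kh_k^2P)$ coincides with Costa's optimal MMSE coefficient $\alpha=\frac{P_{\mathrm{sig}}}{P_{\mathrm{sig}}+1}$ for $P_{\mathrm{sig}}=h_k^2\bar\lambda_kP$. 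With this choice the Gelfand--Pinsker rate $I(U_k;Y_k)-I(U_k;\ve x_0)$ reduces to $\tfrac12\log(1+h_k^2\bar\lambda_kP)$, free of any interference penalty, giving the stated bound on $R_k$.

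Finally I would combine the two analyses: because the codebooks are generated independently and each decoder operates only on its own output, the overall error event lies in the union of the single-user error events, so a union bound shows the average error probability vanishes for any rate tuple strictly inside the region, for each fixed $\underline\lambda$. I expect the only genuinely delicate step to be the cognitive-receiver argument---one must confirm that the binning cost $I(U_k;\ve x_0)$ of making $\ve u_k$ jointly typical with the known state is exactly cancelled by the choice of $\gamma$, so that the net rate carries no interference term; the primary-receiver bound, by contrast, is an immediate AWGN computation once the residual noise is identified as Gaussian.
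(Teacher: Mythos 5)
Your proposal is correct and follows essentially the same route as the paper, which states Proposition \ref{prop:DPC} directly from the scheme description (i.i.d.\ Gaussian codebooks with interference treated as noise at Receiver $0$, and Costa's dirty-paper coding with the MMSE coefficient $\gamma=\bar\lambda_k h_k^2P/(1+\bar\lambda_k h_k^2P)$ at each cognitive pair) without writing out further details. Your write-up merely makes explicit the standard steps the paper leaves implicit: the power-constraint check, the effective AWGN channel with signal power $\bigl(\sqrt{P}+\sum_{k\geq 1}b_k\sqrt{\lambda_k P}\bigr)^2$ and noise variance $1+\sum_{k\geq 1}b_k^2\bar\lambda_k P$, the Gelfand--Pinsker rate $I(U_k;Y_k)-I(U_k;X_0)=\tfrac12\log(1+\bar\lambda_k h_k^2 P)$, and the final union bound.
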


It is worth noting that this scheme achieves the capacity in the two-user case ($K=1$) when $|b_1|\leq 1$, see \cite[Theorem 3.7]{wu_capacity_2007} for example.

Another coding scheme which performs well in the two-user case when $|b_1|>1$, is to let the primary user decode the  message of the cognitive user as well \cite{Rini_etal_2012}. We extend this scheme  by enabling \textit{simultaneous nonunique decoding} (SND) \cite[Ch. 6]{Elgamal_Kim_2011} at the primary decoder. SND improves the cognitive rates over uniquely decoding the messages $W_k, k\geq 1$ at primary decoder. We briefly describe the random coding argument for this coding scheme.
\begin{itemize}
\item \textbf{Primary encoder.} For each possible message $W_0$, User $0$ generates  a codewords $\ve x_0$ with  i.i.d. entries  according to the distribution $\mathcal N(0,P)$.
\item \textbf{Cognitive encoders.} Given the power splitting parameters $\lambda_k$,  user $k$ generates $\ve{\hat x}_k$ with i.i.d. entry according to the distribution $\mathcal  N(0,\bar\lambda_kP)$ for its message $W_k, k\geq 1$. The channel input is given by
\begin{IEEEeqnarray}{rCl}
\ve x_k=\sqrt{\lambda_k}\ve x_0+\ve{\hat x}_k
\end{IEEEeqnarray}
\item \textbf{Primary decoder.} Decoder $0$ simultaneously decodes  $\ve x_0, \ve{\hat x}_1,\ldots,\ve{\hat x}_K$ from $\ve y_1$  using typicality decoding. More precisely,  let $T^{(n)}(Y_0, X_0,\hat X_1\ldots, \hat X_K)$ denotes the set of $n$-length typical sequences (see, for example \cite[Ch. 2]{Elgamal_Kim_2011}) of the joint distribution $(\prod_{i=1}^KP_{\hat X_i})P_{X_0}P_{Y_0|X_0\ldots \hat X_K}$. The primary decoder decodes its message $\ve x_0$ such that
\begin{IEEEeqnarray}{rCl}
(\ve x_0, \hat{\ve x}_1,\ldots, \hat{\ve x}_K)\in T^{(n)}(Y_0, X_0,\hat X_1\ldots, \hat X_K)
\end{IEEEeqnarray}
for a unique $\ve x_0$ and \textit{some} $\hat{\ve x}_k, k\geq 1$.
\item \textbf{Cognitive decoders.} Decoder $k$ decodes $\ve{\hat x}_k$ from $\ve y_k$ for $k\geq 1$.
\end{itemize}

We have the following achievable rate region for the above coding scheme.
\begin{proposition}[SND at Rx $0$]
For the cognitive many-to-one channel, the above simultaneous nonunique decoding scheme achieves the rate region:
\begin{IEEEeqnarray*}{rCl}
R_0&\leq& \frac{1}{2}\log\left(1+\bigg(\sqrt{P}+\sum_{k\geq 1} b_k\sqrt{\lambda_k P}\bigg)^2\right)\\
R_0+\sum_{k\in\mathcal J} R_k&\leq& \frac{1}{2}\log\Bigg(1+\sum_{k\in\mathcal J} b_k^2\bar\lambda_k P \nonumber\\
&& + \bigg(\sqrt{P}+\sum_{k\geq 1} b_k\sqrt{\lambda_k P}\bigg)^2\Bigg)\\
R_k&\leq& \frac{1}{2}\log\left(1+\frac{\bar\lambda_k h_k^2P_k}{1+\lambda_k h_k^2P_k}\right)
\end{IEEEeqnarray*}
for any power-splitting parameter $\underline\lambda$ and every subset $\mathcal J\subseteq[1:K]$.
\label{prop:JointDecoding}
\end{proposition}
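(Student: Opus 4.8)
The plan is to run the standard simultaneous-nonunique-decoding (SND) error analysis based on joint typicality and the packing lemma, splitting the work into an analysis at the primary receiver, which yields the first two constraint families, and a separate elementary point-to-point analysis at each cognitive receiver, which yields the third. I would first verify that the encoding is admissible: since $\ve x_0$ and $\ve{\hat x}_k$ are generated independently and $\ve x_k = \sqrt{\lambda_k}\ve x_0 + \ve{\hat x}_k$, the average transmit power of User $k$ is $\lambda_k P + \bar\lambda_k P = P$, so the power constraint holds for every $\underline\lambda$.

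For Receiver $0$, substituting the codewords into the channel gives
\begin{IEEEeqnarray*}{rCl}
\ve y_0 &=& \Big(1 + \sum_{k\geq 1} b_k\sqrt{\lambda_k}\Big)\ve x_0 + \sum_{k\geq 1} b_k\ve{\hat x}_k + \ve z_0,
\end{IEEEeqnarray*}
so the primary signal is received with power $(\sqrt{P} + \sum_{k\geq 1} b_k\sqrt{\lambda_k P})^2$. Because SND requires only $W_0$ to be unique, the sole error event is $\{\hat W_0 \neq W_0\}$, and I would index the competing jointly typical tuples by the subset $\mathcal J \subseteq [1:K]$ of cognitive indices that are \emph{also} decoded incorrectly, the indices in $\mathcal J^c$ keeping their true codewords. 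There are about $2^{n(R_0 + \sum_{k\in\mathcal J} R_k)}$ such tuples, each jointly typical with $\ve y_0$ with probability about $2^{-nI(X_0,\hat X_{\mathcal J};Y_0\mid \hat X_{\mathcal J^c})}$, so by the packing lemma (a union bound over $\mathcal J$ and over tuples) the error probability vanishes provided $R_0 + \sum_{k\in\mathcal J} R_k < I(X_0, \hat X_{\mathcal J}; Y_0 \mid \hat X_{\mathcal J^c})$ for every $\mathcal J$. Evaluating this conditional mutual information for the Gaussian inputs is routine: conditioning on $\hat X_{\mathcal J^c}$ strips out those interferers, leaving signal power $(\sqrt{P} + \sum_{k\geq 1} b_k\sqrt{\lambda_k P})^2$, residual interference $\sum_{k\in\mathcal J} b_k^2\bar\lambda_k P$, and unit noise, which reproduces the second constraint family; taking $\mathcal J = \emptyset$ recovers the stand-alone bound on $R_0$.

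For the cognitive receivers I would argue one at a time. Receiver $k$ observes $\ve y_k = h_k\sqrt{\lambda_k}\ve x_0 + h_k\ve{\hat x}_k + \ve z_k$ and decodes $\ve{\hat x}_k$ while treating the cooperative term $h_k\sqrt{\lambda_k}\ve x_0$ (no dirty-paper coding is used here, in contrast to Proposition \ref{prop:DPC}) as independent Gaussian noise of power $h_k^2\lambda_k P$. A standard point-to-point typicality argument then gives $R_k < \tfrac12\log\big(1 + h_k^2\bar\lambda_k P/(1 + h_k^2\lambda_k P)\big)$, the third constraint. Letting $n\to\infty$ and combining all the error events across the $K+1$ receivers by the union bound completes the achievability argument.

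The main obstacle is the correct bookkeeping of the SND error events at Receiver $0$. The key point to get right is that, since only $W_0$ must be unique, the error events are indexed purely by a wrong $W_0$ together with an arbitrary subset $\mathcal J$ of wrong cognitive messages, so that \emph{no} stand-alone constraints of the form $\sum_{k\in\mathcal J} R_k \leq (\cdots)$ ever appear; such constraints would arise under fully unique joint decoding and would needlessly shrink the region. Once this enumeration is set up and the conditional mutual information is seen to collapse to the stated Gaussian expression, the remainder is mechanical, and no Fourier--Motzkin elimination is required because the cognitive rates are already pinned down individually at their own receivers.
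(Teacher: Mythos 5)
Your proposal is correct and follows essentially the same route as the paper, which only sketches the random coding scheme (Gaussian superposition at the cognitive encoders, simultaneous nonunique typicality decoding at Receiver $0$, treating the residual $\ve x_0$ term as noise at each cognitive receiver) and leaves the standard SND error analysis implicit, citing the textbook treatment. Your packing-lemma bookkeeping---indexing error events by a wrong $W_0$ together with a subset $\mathcal J$ of wrong cognitive codewords, so that no stand-alone $\sum_{k\in\mathcal J} R_k$ constraints arise---and your Gaussian evaluation of $I(X_0,\hat X_{\mathcal J};Y_0\mid \hat X_{\mathcal J^c})$ are precisely the details the paper omits, and they reproduce the stated region exactly.
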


We point out that if instead of using simultaneous nonunique decoding at the primary decoder but require it to decode all messages of the cognitive users $W_k, k\geq 1$, we would have the extra constraints
\begin{IEEEeqnarray}{rCl}
\sum_{k\in\mathcal J} R_k&\leq& \frac{1}{2}\log\left(1+\sum_{k\in\mathcal J} b_k^2\bar\lambda_k P\right)
\end{IEEEeqnarray}
for every subset $\mathcal J\subseteq[1:K]$, which may further reduce the achievable rate region.

For the two-user case ($K=1$), the above scheme achieves the capacity when $|b_1|\geq \sqrt{1+P+P^2}+P$, see \cite[Theorem V.2]{Rini_etal_2012} for example.

We can further extend the above coding schemes by combining both dirty paper coding and SND at Rx $0$, as it is done in \cite[Theorem IV.1]{Rini_etal_2012}. However this  results in a very cumbersome rate expression in this system but gives little insight to the problem.  On the other hand, we will show in the sequel that our proposed scheme combines the ideas in the above two schemes in a unified framework.  

\section{A lattice codes based scheme for Cognitive Many-to-One  channels}\label{sec:ComputeAndForward}
In this section we provide a novel coding scheme for the  cognitive many-to-one channels based on a modified compute-and-forward scheme. The key idea of this approach is that instead of decoding the desired codeword directly at the primary receiver, it is more beneficial to first recover several integer combinations of the codewords and then solve for the desired message. We first briefly introduce the nested lattice codes used for this coding scheme and then describe how to adapt the compute-and-forward technique to our problem.

\subsection{Nested Lattice Codes}\label{subsec:LatticeCodes}
A lattice $\Lambda$ is a discrete subgroup of $\mathbb R^n$ with the property that if $\ve t_1, \ve t_2\in \Lambda$, then $\ve t_1+\ve t_2\in \Lambda$.  The details about lattice and lattice codes can be found, for example,  in \cite{Erez_etal_2005} \cite{ErezZamir_2004}. The lattice quantizer $Q_{\Lambda}: \mathbb R^n\rightarrow\Lambda$ is defined as as:
\begin{IEEEeqnarray}{rCl}
Q_{\Lambda}(\ve x)=\mbox{argmin}_{\ve t\in\Lambda}\norm{\ve t-\ve x}
\end{IEEEeqnarray}
The fundamental Voronoi region of a lattice $\Lambda$ is defined to be
\begin{IEEEeqnarray}{rCl}
\mathcal V:=\{\ve x\in\mathbb R^n:Q_{\Lambda}(\ve x)=\ve 0\}
\end{IEEEeqnarray}
The modulo operation gives the quantization error with respect to the lattice:
\begin{IEEEeqnarray}{rCl}
[\ve x]\mode\Lambda=\ve x-Q_{\Lambda}(\ve x)
\label{eq:mode}
\end{IEEEeqnarray}

Two lattices $\Lambda$ and $\Lambda'$ are said to be nested if $\Lambda'\subseteq\Lambda$. A nested lattice code $\mathcal C$ can be constructed  using the coarse $\Lambda'$ for \textit{shaping} and the fine lattice $\Lambda$ as codewords:
\begin{IEEEeqnarray}{rCl}
\mathcal C:=\{\ve t\in\mathbb R^n: \ve t\in\Lambda\cap\mathcal V'\}
\label{eq:nested_code}
\end{IEEEeqnarray}
where $\mathcal V'$ is the Voronoi region of $\Lambda'$. The \textit{second moment} of the lattice $\Lambda'$ per dimension is defined to be
\begin{IEEEeqnarray}{rCl}
\sigma^2(\Lambda')=\frac{1}{n\vol(\mathcal V')}\int_{\mathcal V'}\norm{\ve x}^2\mbox{d}\ve x
\label{eq:second_moment}
\end{IEEEeqnarray}
which is also the average power of code $\mathcal C$ defined in (\ref{eq:nested_code}) if the codewords $\ve t$ are uniformly distributed in $\mathcal V'$. 

The following two definitions are important for the lattice code construction considered here.
\begin{definition}[Good for AWGN channel]
Let $\ve z$ be a length-$n$ vector with i.i.d. Gaussian component $\mathcal N(0,\sigma^2_z)$, A sequence of $n$-dimensional lattices $\Lambda^{(n)}$ with its Voronoi region $\mathcal V^{(n)}$ is said to be good for AWGN channel if
\begin{IEEEeqnarray}{rCl}
\Pr(\ve z\notin\mathcal V^{(n)})\leq e^{-nE_p(\mu)}
\end{IEEEeqnarray}
where 
\begin{IEEEeqnarray}{rCl}
\mu=\frac{(\vol(\mathcal V^{(n)}))^{2/n}}{2\pi e\sigma_z^2}
\end{IEEEeqnarray}
is the volume-to-noise ratio and $E_p(\mu)$ is the Poltyrev exponent \cite{poltyrev_coding_1994} which is positive for $\mu>1$.
\label{def:AWGNGood}
\end{definition}
\begin{definition}[Good for quantization]
A sequence of $n$-dimensional lattices $\Lambda^{(n)}$ is said to be good for quantization if
\begin{IEEEeqnarray}{rCl}
\lim_{n\rightarrow\infty}\frac{\sigma^2(\Lambda^{(n)})}{(\vol(\mathcal V^{(n)}))^{2/n}} = \frac{1}{2\pi e}
\end{IEEEeqnarray}
with $\sigma(\Lambda^{(n)})^2$ denoting the second moment of the lattice $\Lambda^{(n)}$ defined in (\ref{eq:second_moment}). Notice the quantity on the LHS approachs the limit from above.
\label{def:QuantizationGood}
\end{definition}

Erez and Zamir \cite{ErezZamir_2004} have shown that there exist nested lattice codes where the fine lattice and the coarse lattice are both good for AWGN channel and good for quantization. Nam et al. \cite[Theorem 2]{Nam_etal_2011} extend the results to the case when there are multiple nested lattice codes.

Now we construct the nested lattice codes for our problem.  Let $\underline\beta:=\{\beta_0,\ldots,\beta_K\}$
denotes a set of  positive numbers.  For each user, we choose a lattice $\Lambda_k$ which is good for AWGN channel.  These $K+1$ fine lattices will form a  nested lattice chain \cite{Nam_etal_2011} according to a certain order which will be determined later. We let $\Lambda_c$ denote the coarsest lattice among them, i.e., $\Lambda_c\subseteq\Lambda_k$ for all $k\in[0:K]$.  As shown in \cite[Thm. 2]{Nam_etal_2011}, we can also find another $K+1$ simultaneously good nested lattices such that $\Lambda_k^s\subseteq\Lambda_c$ for all $k\in[0:K]$ whose second moments satisfy 
\begin{subequations}
\begin{align}
\sigma_0^2&:=\sigma^2(\Lambda_0^s)=\beta_0^2 P \\
\sigma_k^2&:=\sigma^2(\Lambda_k^s)=(1-\lambda_k)\beta_k^2P,\quad k\in[1:K]
\end{align}
\label{eq:sigma_sqr}
\end{subequations}
with given power-splitting parameters $\underline\lambda$.  Introducing the scaling coefficients $\underline{\beta}$ enables us to flexibly balance the rates of different users and utilize  the channel state information in a natural way. This point is made clear in the next section when we describe the coding scheme.  

The codebook for user $k$ is constructed as
\begin{align}
\mathcal C_k:=\{\ve t_k\in\mathbb R^n: \ve t_k\in\Lambda_k\cap\mathcal V_k^s\},\quad k\in[0:K]
\end{align}
where $\mathcal V_k^s$ denotes the Voronoi region of the \textit{shaping lattice} $\Lambda_k^s$  used to enforce the power constraints. With this lattice code, the message rate of user $k$ is also given by
\begin{IEEEeqnarray}{rCl}
R_k=\frac{1}{n}\log\frac{\vol(\mathcal V_k^s)}{\vol(\mathcal V_k)}
\end{IEEEeqnarray}
with $\mathcal V_k$ denoting the Voronoi region of the fine lattice $\Lambda_k$.

\subsection{Main Results}\label{sec:MainResult}
Equipped with the nested lattice codes constructed above, we are ready to specify the coding scheme. Each cognitive user splits its power and uses one part to help the primary receiver.  Messages $W_k\in\mathcal W_k$ of user $k$ are mapped surjectively to lattice points $\ve t_k\in\mathcal C_k$ for all $k$. 

Let $\underline{\gamma}=\{\gamma_1,\ldots,\gamma_K\}$ be $K$ real numbers to be determined later. Given  all messages $W_k$ and their corresponding lattice points $\ve t_k$, transmitters form
\begin{IEEEeqnarray}{rCl}
\ve x_0&=&\left[\frac{\ve t_0}{\beta_0}+\ve d_0\right]\mode\Lambda_0^s/\beta_0 \IEEEyessubnumber\\
\ve{\hat x}_k&=&\left[\frac{\ve t_k}{\beta_k}+\ve d_k-\frac{\gamma_k\ve x_0}{\beta_k}\right]\mode\Lambda_k^s/\beta_k,k\in[1:K] \IEEEyessubnumber
\label{eq:x0_xk}
\end{IEEEeqnarray}
where $\ve d_k$ (called \textit{dither}) is a random vector independent of $\ve t_k$ and uniformly distributed in $\mathcal V_k^s/\beta_k$.  It follows that $\ve{ x}_0$ is also uniformly distributed in $\mathcal V_0^s/\beta_0$ hence has average power $\beta_0^2P/\beta_0^2=P$ and is independent from $\ve t_0$ \cite[Lemma 1]{ErezZamir_2004}.  Similarly $\ve{\hat x}_k$ has average power $\bar\lambda_k P$ and is independent from $\ve t_k$ for all $k\geq 1$. 

Although $\ve x_0$ will act as interference at cognitive receivers, it is possible to cancel its effect at the receivers since it is known to cognitive transmitters. The dirty-paper coding idea in the previous section can also be implemented within the framework of lattice codes, see for example \cite{zamir_nested_2002}. The parameters $\underline{\gamma}$ are used to cancel  $\ve x_0$ partially or completely at the cognitive receivers.

The channel input for the primary transmitter is $\ve x_0$ defined above and the channel input for each cognitive transmitter is 
\begin{IEEEeqnarray}{rCl}
\ve x_k&=&\sqrt{\lambda_k}\ve x_0+\ve {\hat x}_k,\quad k\in[1:K].
\end{IEEEeqnarray}
Notice that  $\mathbb E\{\norm{\ve x_k}^2\}/n=\lambda_k P+\bar\lambda_kP=P$ hence power constraints are  satisfied for all cognitive users. 

We first give an informal description of the coding scheme and then present the main theorem. Let $\ve a:=[a_0,\ldots, a_K]\in\mathbb Z^{K+1}$ be a vector of integers.  We shall show that the integer sum of the lattice codewords $\sum_{k\geq 0}a_k\ve t_k$ can be decoded reliably at the primary user  for certain rates $R_k$.  After this, we continue decoding further integer sums with judiciously chosen coefficients and solve for the desired codeword using these sums at the end. An important observation  (also made in \cite{NazerGastpar_2011} and \cite{Nazer_2012}) is that the integer sums we have already decoded can be used to decode the subsequent integer sums. We now point out the new ingredients in our proposed scheme compared to the existing successive compute-and-forward schemes as in \cite{Nazer_2012} and  \cite{NazerGastpar_2011}. Firstly the scaling parameters introduced in (\ref{eq:sigma_sqr}) allow users to adjust there rates according to the channel gains and generally achieve larger rate regions. They will also be important for deriving constant gap and capacity results for the non-cognitive channel in Section \ref{sec:noncog}. Secondly as the cognitive message acts as interference at cognitive receivers,  using dirty-paper coding against the cognitive message in general improves the cognitive rates. But its implementation within successive compute-and-forward framework is not straightforward and requires careful treatment, as shown later in our analysis.

In general, let $L\in[1:K+1]$  be the total number of  integer sums\footnote{There is no need to decode more than $K+1$ sums since there are $K+1$ users in total.}  the primary user decodes and we represent the $L$ sets of coefficients in the following \textit{coefficient matrix}:
\begin{IEEEeqnarray}{rCl}
\ve A=\begin{pmatrix}
a_0(1) &a_1(1) &a_2(1) &\ldots &a_K(1)\\
\vdots &\vdots &\vdots &\vdots &\vdots\\
a_0(L) &a_1(L) &a_2(L) &\ldots &a_K(L)
\end{pmatrix},
\label{eq:matrix_representation}
\end{IEEEeqnarray}
where the $\ell$-th row $\ve a(\ell):=[a_0(\ell),\ldots,a_K(\ell)]$ represents the coefficients for the $\ell$-th integer sum $\sum_ka_k(\ell)\ve t_k$. We will show all $L$ integer sums can be decoded reliably if the rate of user $k$ satisfies
\begin{IEEEeqnarray}{rCl}
R_k&\leq &\min_{\ell} r_k(\ve a_{\ell|1:\ell-1},\underline{\lambda},\underline{\beta},\underline{\gamma})
\end{IEEEeqnarray}
with 
\begin{align}
r_k(\ve a_{\ell|1:\ell-1},\underline{\lambda},\underline{\beta},\underline{\gamma}):=\max_{\alpha_1,\ldots,\alpha_\ell\in\mathbb R}\frac{1}{2}\log^{+}\left(\frac{\sigma_k^2}{N_0(\ell)}\right).
\label{eq:Rk_SIC}
\end{align}
The notation $\ve a_{\ell|1:\ell-1}$ emphasizes the fact that when the primary decoder decodes the $\ell$-th sum with coefficients $\ve a(\ell)$, all previously decoded sums with coefficients $\ve a(1),\ldots,\ve a(\ell-1)$ are used.   In the expression above $\sigma_k^2$ is given in (\ref{eq:sigma_sqr}) and $N_0(\ell)$ is defined as
\begin{IEEEeqnarray}{rCl}
N_0(\ell)&&:=\alpha_\ell^2+\sum_{k\geq 1}\left(\alpha_\ell b_k-a_k(\ell)\beta_k-\sum_{j=1}^{\ell-1}\alpha_ja_k(j)\beta_k\right)^2\bar\lambda_k P \nonumber\\
&&+\left(\alpha_\ell b_0-a_0(\ell)\beta_0-\sum_{j=1}^{\ell-1}\alpha_ja_0(j)\beta_0-g(\ell)\right)^2P
\label{eq:N_0_l}
\end{IEEEeqnarray}
with 
\begin{IEEEeqnarray}{rCl}
b_0&:=&1+\sum_{k\geq 1}b_k\sqrt{\lambda_k}\label{eq:b_0}\\
g(\ell)&:=&\sum_{k\geq 1}\left(\sum_{j=1}^{\ell-1}\alpha_ja_k(j)+a_k(\ell)\right)\gamma_k. \label{eq:g_l}
\end{IEEEeqnarray}

For any matrix $\ve A\in\mathbb F_p^{L\times (K+1)}$,  let $\ve A'\in \mathbb F_p^{L\times K}$ denote the  matrix $\ve A$ without the first column. We define a set of matrices as
\begin{IEEEeqnarray}{rCl}
\mathcal A(L):=\{&&\ve A\in\mathbb F_p^{L\times (K+1)}:\mbox{rank}(\ve A)=m, \mbox{rank}(\ve A')=m-1 \nonumber\\
&& \mbox{for some integer }m, 1\leq m\leq L \}.
\label{eq:valid_coefficients}
\end{IEEEeqnarray}
We will show that if the coefficients matrix $\ve A$ of the $L$ integer sums is in this set, the desired codeword $\ve t_0$ can be reconstructed at the primary decoder.  For cognitive receivers, the decoding procedure is much simpler. They will decode the desired codewords directly using lattice decoding. 

Now we state the main theorem of this section formally and the proof will be presented in the next section.

\begin{theorem}
For any given set of power-splitting parameters $\underline\lambda$, positive numbers $\underline\beta$, $\underline{\gamma}$ and any coefficient matrix $\ve A\in\mathcal A(L)$ defined in (\ref{eq:valid_coefficients}) with $L\in[1:K+1]$,  define  $\mathcal L_k:=\{\ell\in[1:L]|a_k(\ell)\neq 0\}$.  If $r_k(\ve a_{\ell|1:\ell-1},\underline{\lambda},\underline{\beta},\underline{\gamma})>0$ for all $\ell\in\mathcal L_k$, $k\in[0:K]$,  then the following rate is achievable for the cognitive many-to-one interference channel
\begin{IEEEeqnarray}{rCl}
R_0&\leq& \min_{\ell\in\mathcal L_0}r_0(\ve a_{\ell|1:\ell-1},\underline{\lambda},\underline{\beta},\underline{\gamma})\IEEEyessubnumber \label{eq:R_0_cog}\\
R_k&\leq &\min\bigg\{\min_{\ell\in\mathcal L_k}r_k(\ve a_{\ell|1:\ell-1},\underline{\lambda},\underline{\beta},\underline{\gamma}), \nonumber\\
&&\quad\quad \quad \max_{\nu_k\in\mathbb R}\frac{1}{2}\log^+\frac{\sigma_k^2}{N_k(\gamma_k)} 
  \bigg\}\mbox{ for } k\geq 1.\IEEEyessubnumber\label{eq:R_k_cog}
\end{IEEEeqnarray}
The expressions $r_k(\ve a_{\ell|1:\ell-1},\underline{\lambda},\underline{\beta},\underline{\gamma})$ and  $\sigma_k^2$ are defined in  (\ref{eq:Rk_SIC}) and  (\ref{eq:sigma_sqr}) respectively,  and $N_k(\gamma_k)$ is defined as
\begin{align}
N_k(\gamma_k):=&\nu_k^2+(\nu_kh_k-\beta_k)^2\bar\lambda_k P\nonumber\\
&+(\nu_k\sqrt{\lambda_k} h_k-\gamma_k)^2P
\label{eq:N_k}
\end{align}
\label{thm:rate_cog}
\end{theorem}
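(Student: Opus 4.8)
The plan is to decompose the argument into three essentially independent parts: the successive decoding of the $L$ integer sums at the primary receiver, the purely algebraic recovery of $\ve t_0$ from those sums, and the direct dirty-paper lattice decoding at each cognitive receiver; a union bound over the error events then assembles the rate region.

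First I would simplify the primary channel. Substituting $\ve x_k=\sqrt{\lambda_k}\ve x_0+\ve{\hat x}_k$ into the equation for $\ve y_0$ gives $\ve y_0=b_0\ve x_0+\sum_{k\geq 1}b_k\ve{\hat x}_k+\ve z_0$, with $b_0$ as in (\ref{eq:b_0}). After removing the known dithers, the encoding (\ref{eq:x0_xk}) together with the crypto lemma yields $\beta_0\ve x_0\equiv\ve t_0$ and $\beta_k\ve{\hat x}_k\equiv\ve t_k-\gamma_k\ve x_0\pmod{\Lambda_k^s}$, so each codeword is recoverable modulo the coarse lattice from the transmitted signals. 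To decode the $\ell$-th sum $\ve s_\ell:=\sum_k a_k(\ell)\ve t_k$, the receiver forms the combination $\alpha_\ell\ve y_0+\sum_{j<\ell}\alpha_j\ve s_j$ using the already-decoded sums, and reduces modulo the appropriate lattice. Matching the $\ve{\hat x}_k$ and $\ve x_0$ contributions against the integer targets $a_k(\ell)\beta_k$ and $a_0(\ell)\beta_0$ produces exactly the effective noise $N_0(\ell)$ of (\ref{eq:N_0_l}); the quantity $g(\ell)$ of (\ref{eq:g_l}) is precisely the consolidated coefficient of $\ve x_0$ accumulated from the dirty-paper terms $-\gamma_k\ve x_0$ hidden inside each $\ve{\hat x}_k$, summed over both the current row and all previously used rows. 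Invoking the AWGN-goodness of the lattices and the multi-lattice existence result \cite[Thm. 2]{Nam_etal_2011}, $\ve s_\ell$ is decoded with vanishing error probability whenever $R_k<\tfrac12\log^+(\sigma_k^2/N_0(\ell))$ for every $k$ with $a_k(\ell)\neq 0$; optimizing over $\alpha_1,\dots,\alpha_\ell$ gives (\ref{eq:Rk_SIC}), and the minimum over $\ell\in\mathcal L_k$ yields (\ref{eq:R_0_cog}) and the first term of (\ref{eq:R_k_cog}). The hard part will be the bookkeeping in this step: verifying that the modulo-lattice reductions and dither cancellations align so that the scaled combination collapses exactly onto $\ve s_\ell$, and that the dirty-paper coefficients $\gamma_k$ interact with the successive-cancellation coefficients $\alpha_j$ to produce the single term $g(\ell)$ rather than an uncontrolled residual.

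Next I would show $\ve t_0$ is recoverable once all $L$ sums are in hand. Working over $\mathbb F_p$ through the standard lattice-to-field correspondence, the decoded sums give the linear system in which the codeword vector is multiplied by $\ve A$. It suffices that the unit vector $\ve e_0=(1,0,\dots,0)$ lie in the row space $V$ of $\ve A$, since then $\ve t_0$ is a known linear combination of the $\ve s_\ell$. This is exactly what $\ve A\in\mathcal A(L)$ in (\ref{eq:valid_coefficients}) guarantees: the projection dropping the first coordinate maps $V$ onto the row space of $\ve A'$, so the restricted kernel has dimension $\dim V-\mathrm{rank}(\ve A')=m-(m-1)=1$; that kernel consists of the vectors of $V$ supported on the first coordinate, hence equals $\mathrm{span}(\ve e_0)$, giving $\ve e_0\in V$.

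Finally, for cognitive receiver $k$ the channel is $\ve y_k=h_k\sqrt{\lambda_k}\ve x_0+h_k\ve{\hat x}_k+\ve z_k$, in which $h_k\sqrt{\lambda_k}\ve x_0$ is interference known in coded form to Transmitter $k$. Forming $\nu_k\ve y_k$ and exploiting the dirty-paper pre-subtraction built into $\ve{\hat x}_k$, the residual seen when decoding $\ve t_k$ splits into the AWGN term $\nu_k^2$, the self-noise $(\nu_kh_k-\beta_k)^2\bar\lambda_kP$ of $\ve{\hat x}_k$, and the uncancelled interference $(\nu_k\sqrt{\lambda_k}h_k-\gamma_k)^2P$, i.e. exactly $N_k(\gamma_k)$ of (\ref{eq:N_k}); lattice-decoding goodness then delivers the rate $\tfrac12\log^+(\sigma_k^2/N_k(\gamma_k))$ after maximizing over $\nu_k$, which is the second term of (\ref{eq:R_k_cog}). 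Taking the intersection of the three sets of constraints completes the proof.
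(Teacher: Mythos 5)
Your proposal is correct and follows essentially the same route as the paper: successive lattice decoding of the $L$ integer sums at Receiver $0$ with effective noise $N_0(\ell)$ (tracking the dirty-paper residual through $g(\ell)$), linear-algebraic recovery of $\ve t_0$ from the rank gap between $\ve A$ and $\ve A'$, and MMSE-scaled lattice decoding with dirty-paper cancellation at each cognitive receiver. The only differences are cosmetic: you perform the successive cancellation with the previously decoded sums $\ve s_j$ themselves, whereas the paper's Appendix~\ref{app:SIC} combines $\ve y_0$ with the inferred equivalent noises $\tilde{\ve z}_0(j)$ --- an equivalent re-parametrization since $\tilde{\ve z}_0(j)=\tilde{\ve y}_0^{(j)}-\ve s_j$ and $\tilde{\ve y}_0^{(j)}$ is itself a known combination of $\ve y_0$, the dithers, and earlier sums --- and you establish $\ve e_0\in\mathrm{rowspace}(\ve A)$ by dimension counting where the paper runs Gaussian elimination on $\ve A'$ and $\ve A$.
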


Several comments are made  on the above theorem. We use $r_k(\ve a_{\ell|1:\ell-1})$ to denote $r_k(\ve a_{\ell|1:\ell-1},\underline{\lambda},\underline{\beta},\underline{\gamma})$ for brevity.
\begin{itemize}
\item In our coding scheme the primary user may decode more than one integer sums.  In general, decoding the $\ell$-th sum gives a constraint on $R_k$:
\begin{align}
R_k\leq r_k(\ve a_{\ell|1:\ell-1}).
\end{align}
However notice that if $a_k(\ell)=0$, i.e., the codeword $\ve t_k$ is not in the $\ell$-th sum, then $R_k$ does not have to be constrained by $r_k(\ve a_{\ell|1:\ell-1})$ since this decoding does not concern Tx $k$. This explains the minimization of $\ell$ over the set $\mathcal L_k$ in (\ref{eq:R_0_cog}) and (\ref{eq:R_k_cog}): the set $\mathcal L_k$ denotes all sums in which the codeword $\ve t_k$ participates and $R_k$ is determined by the minimum of $r_k(\ve a_{\ell|1:\ell-1})$ over $\ell$ in $\mathcal L_k$. 

\item Notice that $r_k(\ve a_{\ell|1:\ell-1})$ is not necessarily positive and a negative value means that the $\ell$-th sum cannot be decoded reliably. The whole decoding procedure will succeed only if all sums can be decoded successfully.  Hence in the theorem we require $r_k(\ve a_{\ell|1:\ell-1})>0$ for all $\ell\in\mathcal L_k$ to ensure that all sums can be decoded.

\item The primary user can choose which integer sums to decode, hence  can maximize the rate  over the number of integer sums $L$ and the coefficients matrix $\ve A$ in the set $\mathcal A(L)$, which gives the best rate as:
\begin{IEEEeqnarray*}{rCl}
R_k\leq  \max_{L\in[1:K+1]}\max_{ \ve A\in\mathcal A(L)} \min_{\ell\in\mathcal L_k}   r_k(\ve a_{\ell|1:\ell-1},\underline{\lambda},\underline{\beta},\underline{\gamma}).
\end{IEEEeqnarray*}
The optimal $\ve A$ is the same for all $k$. To see this, notice that the denominator inside the $\log$ of the expression $r_k(\ve a_{\ell|1:\ell-1})$ in (\ref{eq:Rk_SIC}) is the same for all $k$ and the numerator depends only on $k$ but does not involve the coefficient matrix $\ve A$, hence the maximizing $\ve A$ will be the same for all $k$.

\item In the expression of $r_k(\ve a_{\ell|1:\ell-1})$ in (\ref{eq:Rk_SIC}) we should optimize over $\ell$ parameters $\alpha_1,\ldots,\alpha_\ell$. The reason for involving these scaling factors is that there are two sources for the effective noise $N_0(\ell)$ at the lattice decoding stage, one is the non-integer channel gain and the other is the additive Gaussian noise in the channel. These scaling factors are used to balance these two effects and find the best trade-off between them, see \cite[Section III]{NazerGastpar_2011} for a detailed explanation.  The optimal $\alpha_\ell$ can be given explicitly but the expressions are very complicated  hence we will not state it here. We note that the expression $r_k(\ve a_1)$ with the optimized $\alpha_1$, $\beta_k=1$ and $\gamma_k=0$ is the computation rate  of compute-and-forward in \cite[Theorem 2]{NazerGastpar_2011}. 

\item As mentioned in  Section \ref{subsec:LatticeCodes}, the parameters $\underline{\beta}$ are used for controlling the rate of individual users.  Unlike the original compute-and-forward coding scheme in \cite{NazerGastpar_2011} where the transmitted signal $\ve x_k$ contains the lattice codeword $\ve t_k$  in the fine lattice $\Lambda$,  the transmitted signal here contains a scaled version of the lattice codeword $\ve t_k/\beta_k$. By choosing different $\beta_k$ for different user $k$ we can adjust the rate of the individual user and achieve a larger rate region in general. More information about this modified scheme can be found in \cite{ZhuGastpar_2014} where it is applied to other scenarios where the compute-and-forward technique is beneficial.

\item For the cognitive users, their rates are constrained both by their direct channel to the corresponding receiver, and by the decoding procedure at the primary user. The two terms in (\ref{eq:R_k_cog}) reflect these two constraints. The parameters $\underline{\gamma}$ are used to (partially) cancel the interference $\ve x_0$ at the cognitive receivers.  For example if we set $\gamma_k=\nu_k\sqrt{\lambda_k}h_k$, the cognitive receiver $k$ will not experience any interference caused by $\ve x_0$. However this affects the computation rate at the primary user in a non-trivial way through $r_k(\ve a_{\ell|1:\ell-1})$ (cf. Equations (\ref{eq:Rk_SIC}) and (\ref{eq:N_0_l})).
\end{itemize}

This proposed scheme can be viewed as an extension of the techniques used in the conventional schemes discussed in section \ref{sec:ConventionalCoding}.  First of all it includes the dirty-paper coding within the lattice codes framework and we can show the following lemma.
\begin{lemma}
The achievable rates in Proposition \ref{prop:DPC} can be recovered using Theorem \ref{thm:rate_noncog} by decoding one trivial sum with the coefficient $\ve a(1)=[1,0,\ldots, 0]$.
\end{lemma}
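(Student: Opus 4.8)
The plan is to apply Theorem \ref{thm:rate_noncog} in its simplest nontrivial specialization: take $L=1$, so the primary decoder recovers a single integer sum, and set the coefficient vector to $\ve a(1)=[1,0,\ldots,0]$ so that this sum is the bare primary codeword $\ve t_0$. With this choice the index sets defined in the theorem become $\mathcal L_0=\{1\}$ and $\mathcal L_k=\emptyset$ for every $k\ge 1$. By the empty-set reading explained in the first bullet after the theorem, the sum-decoding step then imposes no constraint on the cognitive rates (a minimum over an empty index set), so each $R_k$, $k\ge1$, is limited only by its own direct-channel term, while $R_0$ is limited only by the single computation rate $r_0(\ve a_{1|1:0})$. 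It then remains to show that these two surviving quantities collapse exactly to the bounds of Proposition \ref{prop:DPC}. I would also record at the outset that the hypotheses of the theorem hold trivially: over $\mathbb F_p$ the single row $[1,0,\ldots,0]$ has rank $1$ and its truncation $\ve A'=[0,\ldots,0]$ has rank $0$, so $\ve A\in\mathcal A(1)$ with $m=1$, and the positivity requirement reduces to $r_0(\ve a_{1|1:0})>0$, which will be automatic from the computation below.

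For the primary rate I would evaluate the effective noise $N_0(1)$ of (\ref{eq:N_0_l}) at this coefficient vector. Since $a_k(1)=0$ for all $k\ge1$, the cross term $g(1)$ of (\ref{eq:g_l}) vanishes and the interference contributions collapse to $\alpha_1^2\sum_{k\ge1}b_k^2\bar\lambda_k P$, leaving $N_0(1)=\alpha_1^2\big(1+\sum_{k\ge1}b_k^2\bar\lambda_k P\big)+(\alpha_1 b_0-\beta_0)^2P$, with $b_0$ as in (\ref{eq:b_0}). Optimizing $r_0$ of (\ref{eq:Rk_SIC}) over $\alpha_1$ is then a one-dimensional MMSE minimization; the minimizing $\alpha_1$ is the usual MMSE coefficient, and upon substitution the scaling $\beta_0$ cancels against $\sigma_0^2=\beta_0^2P$ from (\ref{eq:sigma_sqr}), giving $R_0\le\frac12\log\!\big(1+b_0^2P/(1+\sum_{k\ge1}b_k^2\bar\lambda_kP)\big)$. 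Using the identity $b_0^2P=(\sqrt P+\sum_{k\ge1}b_k\sqrt{\lambda_k P})^2$, which follows directly from (\ref{eq:b_0}), this is precisely the DPC bound on $R_0$. (In particular the argument of the logarithm exceeds $1$, so $r_0(\ve a_{1|1:0})>0$, confirming the positivity hypothesis.)

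For the cognitive rates the surviving bound is $R_k\le\max_{\nu_k}\frac12\log^+(\sigma_k^2/N_k(\gamma_k))$, with $N_k(\gamma_k)$ from (\ref{eq:N_k}) and $\sigma_k^2=\bar\lambda_k\beta_k^2P$ from (\ref{eq:sigma_sqr}). Here I would choose $\gamma_k$ to be the full-cancellation (dirty-paper) coefficient $\gamma_k=\bar\lambda_k\sqrt{\lambda_k}Ph_k^2\beta_k/(1+\bar\lambda_kPh_k^2)$, which is nonnegative and hence admissible, so that the third summand $(\nu_k\sqrt{\lambda_k}h_k-\gamma_k)^2P$ of $N_k$ is driven to zero at the optimizing $\nu_k$. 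The step I expect to be the crux is exactly this consistency check: because $\gamma_k$ is a fixed parameter while $\nu_k$ is optimized, one must verify that the MMSE-optimal $\nu_k$ computed from the full three-term $N_k(\gamma_k)$ equals $\gamma_k/(\sqrt{\lambda_k}h_k)$, i.e. that the chosen $\gamma_k$ is a genuine fixed point rather than merely the value that would cancel interference for some other $\nu_k$. Once that identity is established, $N_k$ reduces to $\nu_k^2+(\nu_kh_k-\beta_k)^2\bar\lambda_kP$, a second MMSE minimization yields $N_k^\star=\bar\lambda_kP\beta_k^2/(1+\bar\lambda_kPh_k^2)$, and therefore $\sigma_k^2/N_k^\star=1+\bar\lambda_kh_k^2P$, matching the DPC bound on $R_k$.

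In summary, the primary-rate reduction is a routine MMSE computation that only uses $g(1)=0$ and the identity for $b_0^2P$, while the cognitive-rate reduction carries the real content, since it hinges on the interaction between the fixed dirty-paper parameter $\gamma_k$ and the optimized receive scaling $\nu_k$. I would therefore devote the bulk of the write-up to verifying the fixed-point claim for $\gamma_k$, treating the primary side as a short calculation and noting that the rank and positivity conditions of Theorem \ref{thm:rate_noncog} are satisfied by the single trivial row $\ve a(1)=[1,0,\ldots,0]$.
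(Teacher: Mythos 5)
Correct, and essentially the paper's own argument: you decode the single trivial sum (so $g(1)=0$ and only $R_0$ is constrained, via the MMSE choice of $\alpha_1$ and the identity $b_0^2P=(\sqrt{P}+\sum_{k\geq 1}b_k\sqrt{\lambda_kP})^2$), and at the cognitive receivers you take $\gamma_k=\nu_k^*\sqrt{\lambda_k}h_k$ so that the residual interference term in $N_k(\gamma_k)$ vanishes, yielding $R_k\leq\frac{1}{2}\log(1+\bar\lambda_kh_k^2P)$ exactly as in the paper. The one mis-weighting is the ``fixed-point'' verification you single out as the crux: it is superfluous, because the bound in Theorem \ref{thm:rate_cog} is a maximum over $\nu_k$, so for achievability it suffices to evaluate $N_k(\gamma_k)$ at the single point $\nu_k=\gamma_k/(\sqrt{\lambda_k}h_k)=\nu_k^*$, where the third summand is zero and the first two summands are jointly minimized by the very construction of $\gamma_k$ --- and the fixed-point property holds automatically anyway, since each convex summand of $N_k$ is individually minimized at that $\nu_k$, so no separate stationarity computation is needed.
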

\begin{proof}
For  given power-splitting parameters $\underline{\lambda}$ we  decode only one trivial sum at the primary user by choosing $\ve a(1)$ such that $a_0(1)=1$ and $a_k(1)=0$ for $k\geq 1$, which is the same as decoding $\ve t_0$. First consider  decoding at the primary user. Using the expression (\ref{eq:Rk_SIC})  we have $R_k\leq  r_k(\ve a(1))=\frac{1}{2}\log(\sigma_k^2/N_0(1))$
with
$N_0(1)=\alpha_1^2\left(1+\sum_{k\geq 1}b_k^2\bar\lambda_k P\right)+(\alpha_1 b_0-\beta_0)^2P$
and $g(1)=0$ with this choice of $\ve a(1)$ for any $\underline{\gamma}$. After optimizing $\alpha_1$ we have
\begin{align}
R_0&\leq  \frac{1}{2}\log\left(1+\frac{b_0^2P}{1+\sum_{k\geq 1}b_k^2\bar\lambda_kP}\right).
\end{align}
Notice that  this decoding does not impose any constraint on $R_k$ for $k\geq 1$.

Now we consider the decoding process at the cognitive users. Choosing $\gamma_k=\nu_k\sqrt{\lambda_k}h_k$ in (\ref{eq:N_k}) will give
$N_k(\gamma_k)=\nu_k^2+(\nu_kh_k-\beta_k)^2\bar\lambda_k P$
and 
\begin{align}
\max_{\nu_k\in\mathbb R}\frac{1}{2}\log^+\frac{\sigma_k^2}{N_k(\gamma_k)}=\frac{1}{2}\log(1+h_k^2\bar\lambda_k P)
\end{align}
with the optimal $\nu_k^*=\frac{\beta_kh_k\bar\lambda_kP}{\bar\lambda_kh_k^2P+1}$.  This proves the claim. 
\end{proof}

The proposed scheme can also be viewed as an extension of  simultaneous nonunique decoding (Proposition \ref{prop:JointDecoding}). Indeed, as observed in \cite{bidokhti_nonunqie_2012}, SND can be replaced by either performing the usual joint (unique) decoding to decode all messages or treating interference as noise. The former case corresponds to decoding $K+1$ integer sums with a full rank  coefficient matrix and the latter case corresponds to decoding just one integer sum with the coefficients of  cognitive users' messages being zero. Obviously our scheme includes these two cases. As a generalization, the proposed scheme decodes just enough sums of codewords without decoding the individual messages. Unfortunately it is difficult to show analytically that the achievable rates in Proposition \ref{prop:JointDecoding} can be recovered using Theorem \ref{thm:rate_cog}, since it would require  the primary receiver to decode several non-trivial sums and the achievable rates are not analytically tractable for general channel gains. However the numerical examples in Section \ref{sec:Symmetric} will show that the proposed scheme generally performs better than the conventional schemes.

\subsection{On the Optimal Coefficient Matrix $\ve A$}\label{sec:optimal_A}
From Theorem \ref{thm:rate_cog} and its following comments we see that the main difficulty in evaluating the expression $r_k(\ve a_{\ell|1:\ell-1})$ in (\ref{eq:R_0_cog}) and (\ref{eq:R_k_cog}) is the maximization over all possible integer coefficient matrices in the set $\mathcal A(L)$.  This is an integer programming problem and is analytically intractable for a system with general channel gains $b_1,\ldots, b_K$. In this section we give an  explicit formulation of this problem and  an example of the choice of the coefficient matrix. 

The expression $r_k(\ve a_{\ell|1:\ell-1})$ in (\ref{eq:Rk_SIC}) is not directly amenable to analysis because finding the optimal solutions for the parameters $\{\alpha_\ell\}$ in (\ref{eq:N_0_l}) is prohibitively complex. Now we give an alternative formulation of the problem. We write $N_0(\ell)$ from Eq. (\ref{eq:N_0_l})  in the form of (\ref{eq:N_0_l_expand}).  It can be further rewritten compactly as
\begin{align}
N_0(\ell)=\alpha_{\ell}^2+\norm{\alpha_{\ell}\ve h-\tilde {\ve a}_{\ell}-\sum_{j=1}^{\ell-1}\alpha_j\tilde{\ve a}_j}^2P
\label{eq:N_0_l_compact}
\end{align}
where we define $\ve h, \tilde{\ve a}_j\in \mathbb R^K$  for $j\in[1:\ell]$ in (\ref{eq:h_a}).

\begin{figure*}[!htb]
\begin{IEEEeqnarray}{rCl}
N_0(\ell):=\alpha_\ell^2&&+\sum_{k\geq 1}\left(\alpha_\ell b_k\sqrt{\bar\lambda_k}-a_k(\ell)\beta_k\sqrt{\bar\lambda_k}-\sum_{j=1}^{\ell-1}\alpha_j
a_k(j)\beta_k\sqrt{\bar\lambda_k}\right)^2 P \nonumber\\
&&+\left(\alpha_\ell b_0-a_0(\ell)\beta_0-\sum_{k\geq 1}a_k(\ell)\gamma_k-\sum_{j=1}^{\ell-1}\alpha_j\left(a_0(j)\beta_0+\sum_{k\geq 1}a_k(j)\gamma_k\right)\right)^2P.
\label{eq:N_0_l_expand}
\end{IEEEeqnarray}
\begin{IEEEeqnarray}{rCl}
\ve h&=&\left[b_0, b_1\sqrt{\bar\lambda_1},\ldots, b_K\sqrt{\bar\lambda_k}\right]\nonumber\\
\tilde{\ve a}_j&=&\left[a_0(j)\beta_0+\sum_{k\geq 1}a_k(j)\gamma_k, a_1(j)\beta_1\sqrt{\bar\lambda_1},\ldots,a_K(j)\beta_K\sqrt{\bar\lambda_K} \right],\quad j\in[1:\ell]. \label{eq:h_a}
\end{IEEEeqnarray}
\noindent\rule{18cm}{0.8pt}
\end{figure*}


We will reformulate the above expression in such a way that the optimal parameters $\{\alpha_j\}$ have simple expressions and the optimization problem on $\ve A$ can be stated explicitly. This is shown in the following proposition.

\begin{proposition}
Given $\tilde{\ve a}_j, j\in[1:\ell-1]$ and $\ve h$ in (\ref{eq:h_a}), define
\begin{IEEEeqnarray}{rCl}
\ve u_{j}&=&\tilde{\ve a}_j-\sum_{i=1}^{j-1}\tilde{\ve a}_j|_{\ve u_i}, \quad j=1,\ldots \ell-1 \nonumber\\
\ve u_\ell&=&\ve h-\sum_{i=1}^{\ell-1}\ve h|_{\ve u_i}
\label{eq:u_j}
\end{IEEEeqnarray}
where $\ve x|_{\ve u_i}:=\frac{\ve x^T\ve u_i}{\norm{\ve u_i}^2}\ve u_i$ denotes the projection of a vector $\ve x$ on $\ve u_i$. The problem of finding the optimal coefficient matrix $\ve A$  maximizing $r_k(\ve a_{\ell|1:\ell-1})$ in Theorem \ref{thm:rate_cog} can be equivalently formulated as the following optimization problem
\begin{IEEEeqnarray}{rCl}
\min_{\substack{L\in[1:K+1]\\ \ve A\in\mathcal A(L)}} \max_{\ell\in\mathcal L_k} \norm{\ve B_{\ell}^{1/2}\ve a(\ell)}
\end{IEEEeqnarray}
where $\ve a(\ell)$ is the coefficient vector of the $\ell$-th integer sum. The set $\mathcal A(L)$ is defined in (\ref{eq:valid_coefficients}) and $\mathcal L_k:=\{\ell\in[1:L]|a_k(\ell)\neq 0\}$. The notation $\ve B_\ell^{1/2}$ denotes a  matrix satisfying\footnote{It is  shown that $N_0=P\ve a(\ell)^T\ve B_\ell\ve a(\ell)$ hence  $\ve B_{\ell}$ is positive semi-definite because $N_0\geq 0$. The guarantees the existence of $\ve B_\ell^{1/2}$.} $\ve B_\ell^{1/2}\ve B_\ell^{1/2}=\ve B_\ell$, where $\ve B_\ell$ is given by
\begin{IEEEeqnarray}{rCl}
\ve B_\ell&:=&\ve C\left(\ve I-\sum_{i=1}^{\ell-1}\frac{\ve u_j\ve u_j^T}{\norm{\ve u_j}^2}-\frac{(\ve u_\ell\ve u_\ell^T)P}{1+P\norm{\ve u_\ell}^2}\right)\ve C^T,
\label{eq:matrix_B}
\end{IEEEeqnarray}
and the matrix $\ve C$ is  defined as
\begin{align}
\ve C:=
\begin{pmatrix}
\beta_0 &0 &0 &\ldots &0\\
\gamma_1 &\beta_1\sqrt{\bar\lambda_1} &0 &\dots &0\\
\gamma_2 &0 &\beta_2\sqrt{\bar\lambda_2}&\ldots &0\\
\vdots &\vdots &\vdots &\vdots &\vdots\\
\gamma_K &0 &0 &\ldots &\beta_K\sqrt{\bar\lambda_K}
\end{pmatrix}.
\end{align}
\label{prop:optimization}
\end{proposition}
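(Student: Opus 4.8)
The plan is to show that, after optimizing the scalars $\alpha_1,\dots,\alpha_\ell$, the effective noise $N_0(\ell)$ collapses to the quadratic form $P\,\ve a(\ell)^T\ve B_\ell\,\ve a(\ell)$, so that $r_k(\ve a_{\ell|1:\ell-1})$ is a monotone decreasing function of $\norm{\ve B_\ell^{1/2}\ve a(\ell)}$. Since $\sigma_k^2$ in (\ref{eq:Rk_SIC}) does not involve the $\alpha_j$, maximizing $r_k$ over $\alpha_1,\dots,\alpha_\ell$ is equivalent to minimizing $N_0(\ell)$ over them. I would start from the compact form (\ref{eq:N_0_l_compact}) and first record the identity $\tilde{\ve a}_j=\ve C^T\ve a(j)$, which is immediate by comparing the definition of $\ve C$ with (\ref{eq:h_a}); consequently $\ve a(\ell)^T\ve B_\ell\,\ve a(\ell)=\tilde{\ve a}_\ell^T\ve M\,\tilde{\ve a}_\ell$, where $\ve M$ denotes the bracketed matrix in (\ref{eq:matrix_B}).

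The core is a two-stage minimization. First I would minimize over $\alpha_1,\dots,\alpha_{\ell-1}$ with $\alpha_\ell$ fixed. These appear only through $\sum_{j=1}^{\ell-1}\alpha_j\tilde{\ve a}_j$, which ranges over the subspace spanned by $\tilde{\ve a}_1,\dots,\tilde{\ve a}_{\ell-1}$; hence the inner minimum of $\norm{\alpha_\ell\ve h-\tilde{\ve a}_\ell-\sum_j\alpha_j\tilde{\ve a}_j}^2$ equals the squared norm of the orthogonal projection of $\alpha_\ell\ve h-\tilde{\ve a}_\ell$ onto the orthogonal complement of that subspace. By construction (\ref{eq:u_j}), $\ve u_1,\dots,\ve u_{\ell-1}$ form a Gram--Schmidt orthogonal basis of that subspace and $\ve u_\ell$ is precisely the projection of $\ve h$ onto its complement, so the projection of $\alpha_\ell\ve h-\tilde{\ve a}_\ell$ equals $\alpha_\ell\ve u_\ell-\ve w_\ell$ with $\ve w_\ell:=\tilde{\ve a}_\ell-\sum_{i=1}^{\ell-1}\tilde{\ve a}_\ell|_{\ve u_i}$. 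This reduces $N_0(\ell)$ to $\alpha_\ell^2+P\norm{\alpha_\ell\ve u_\ell-\ve w_\ell}^2$, a scalar quadratic in $\alpha_\ell$ minimized at $\alpha_\ell^\star=P(\ve u_\ell^T\ve w_\ell)/(1+P\norm{\ve u_\ell}^2)$, with minimum value $P\norm{\ve w_\ell}^2-P^2(\ve u_\ell^T\ve w_\ell)^2/(1+P\norm{\ve u_\ell}^2)$.

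It then remains to match this minimum with $P\,\tilde{\ve a}_\ell^T\ve M\,\tilde{\ve a}_\ell$, which I expect to be the main obstacle since it is pure bookkeeping that must line up the projections exactly. Expanding $\tilde{\ve a}_\ell^T\ve M\,\tilde{\ve a}_\ell$ and using orthogonality of the $\ve u_i$ gives $\norm{\tilde{\ve a}_\ell}^2-\sum_{i=1}^{\ell-1}(\ve u_i^T\tilde{\ve a}_\ell)^2/\norm{\ve u_i}^2=\norm{\ve w_\ell}^2$, while $\ve u_\ell\perp\mathrm{span}(\ve u_1,\dots,\ve u_{\ell-1})$ yields $\ve u_\ell^T\tilde{\ve a}_\ell=\ve u_\ell^T\ve w_\ell$; substituting both identities reproduces exactly the minimum above. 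Hence $\min_{\alpha_1,\dots,\alpha_\ell}N_0(\ell)=P\norm{\ve B_\ell^{1/2}\ve a(\ell)}^2$, and because $\tfrac12\log^+(\sigma_k^2/N_0)$ is decreasing in $\norm{\ve B_\ell^{1/2}\ve a(\ell)}$, the inner minimum over $\ell\in\mathcal L_k$ is attained at the $\ell$ maximizing that norm, so maximizing $R_k$ over $\ve A\in\mathcal A(L)$ and $L$ becomes the claimed $\min_{L,\ve A}\max_{\ell\in\mathcal L_k}\norm{\ve B_\ell^{1/2}\ve a(\ell)}$. The footnote's positive-semidefiniteness claim drops out along the way: $\ve M$ acts as the identity on $\mathrm{span}(\ve u_1,\dots,\ve u_{\ell-1})^\perp\cap\ve u_\ell^\perp$, as the positive factor $1/(1+P\norm{\ve u_\ell}^2)$ on $\ve u_\ell$, and as $0$ on the remaining directions, so $\ve M\succeq 0$ and $\ve B_\ell=\ve C\ve M\ve C^T\succeq 0$, guaranteeing the existence of $\ve B_\ell^{1/2}$.
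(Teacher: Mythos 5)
Your proof is correct and follows essentially the same route as the paper: both exploit the Gram--Schmidt basis $\{\ve u_j\}$ to reduce the minimization of $N_0(\ell)$ over $\alpha_1,\ldots,\alpha_\ell$ to a closed-form least-squares computation, arriving at the identical identity $\min_{\alpha_1,\ldots,\alpha_\ell} N_0(\ell)=P\,\ve a(\ell)^T\ve B_\ell\,\ve a(\ell)$ with the same optimizers (your $\alpha_\ell^\star$ equals the paper's $\alpha_\ell'^{*}$ since $\ve u_\ell^T\ve w_\ell=\ve u_\ell^T\tilde{\ve a}_\ell$). The only cosmetic difference is that you minimize in two stages---projecting out $\mathrm{span}(\tilde{\ve a}_1,\ldots,\tilde{\ve a}_{\ell-1})$ first, then solving a scalar quadratic in $\alpha_\ell$---whereas the paper reparametrizes the combination in the $\{\ve u_j\}$ coordinates and optimizes all coefficients jointly; your spectral verification that $\ve B_\ell\succeq 0$ is a small bonus the paper only asserts in a footnote.
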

\begin{IEEEproof}
The proof is given in Appendix \ref{sec:proof_prop}.
\end{IEEEproof}

The above proposition makes the optimization of $\ve A$ explicit, although solving this problem is still a computationally expensive task. We should point out that this problem is related to the \textit{shortest vector problem}  (SVP) where one is to find the shortest non-zero vector in a lattice.  In particular let $\ve B\in\mathbb R^{K\times K}$ be a matrix whose columns constitute one set of basis vectors of the lattice, the SVP can be written as
\begin{align}
\min_{\ve a\in \mathbb Z^k,\ve a\neq \ve 0}\norm{\ve B\ve a}.
\end{align}
Our problem in Proposition \ref{prop:optimization} is more complicated than solving $L$ shortest vector problems. Because the $L$ matrices $\ve B_\ell^{1/2}$ are related through the optimal integer vectors $\ve a(\ell)$ in a nontrivial manner and the objective in our problem is to minimize the maximal vector length $\max_\ell \norm{\ve B_\ell^{1/2}\ve a(\ell)}$ of the $L$ lattices.  Furthermore the vectors $\ve a(1),\ldots, \ve a(\ell)$ should lie in the set $\mathcal A(L)$ and the number of sums $L$ is also an optimization variable.  A low complexity algorithm has been found to solve this instance of SVP  for the compute-and-forward problem in simple cases, see \cite{SahraeiGastpar_2014}.

Here we provide an example on the optimal number of sums we need to decode. Consider a  many-to-one channel with three cognitive users.  We assume $b_1=3.5$ and vary $b_2$ and $b_3$ in the range $[0,6]$.  We set  the direct channel gains $h_k=1$ and consider four different power constraints.  Now the goal is to maximize the sum rate 
\begin{align}
 \max_{\substack{L\in[1:4]\\ \ve A\in\mathcal A(L)}}\sum_{k=0}^4  \min_{\ell\in\mathcal L_k}  r_k(\ve a_{1:\ell-1},\underline{\lambda},\underline{\beta},\underline{\gamma})
\end{align}
with respect to $L\in[1:4]$, $\ve A\in\mathcal A(L)$ and $\underline{\beta}\in\mathbb R^4$. For simplicity we assume $\lambda_k=\gamma_k=0$ for $k\geq 1$. 
Here we search for all possible $\ve A$ and are interested in the optimal $L$: the optimal number of sums that need to be decoded.  
\begin{figure}[!h]
\hspace*{-0.31in}
\centering
 \includegraphics[scale=0.62]{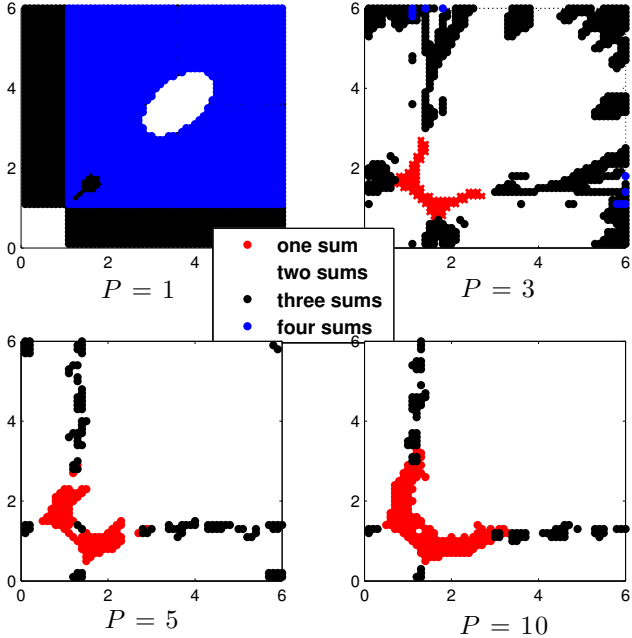}
  \caption{We consider a  many-to-one channel with three cognitive users and $b_1=3.5$. The horizontal and vertical axes are the range of $b_2$ and $b_3$, respectively. The objective is to maximize the sum rate. The red, white, black and blue  areas denote the region of different channel gains, in which the number of the best integer sums (the optimal $L$) is one, two,  three and four respectively. Here the patterns are shown for four different power constraints.}
\label{fig:NumEq}
\end{figure}

The four plots in Figure \ref{fig:NumEq} show the \textit{optimal number of integer sums} that the primary user will decode for different power constraints where $P$ equals $1, 3, 5$ or  $10$. The red area denotes the channel gains where the optimal $L$ equals $1$, meaning we need only decode one sum to optimize the sum rate, and so on. Notice that the sign of the channel coefficients $b_2, b_3$ will not change the optimization problem hence the patterns should be symmetric over both horizontal and vertical axes.  When power is small ($P=1$) we need to decode more than two sums in most channel conditions.  The patterns for $P$ equals $3, 5$ or $10$ look similar but otherwise  rather arbitrary--reflecting the complex nature of the solution to an integer programming problem. One observation from the plots is that  for $P$ relatively large, with most channel conditions we only need to decode two sums and we do not decode four sums, which is equivalent to solving for all messages. This confirms the point we made in the previous section: the proposed scheme generalizes the conventional scheme such as Proposition \ref{prop:JointDecoding}  to  decode just enough information for its purpose, but not more.

\subsection{Proof of  Theorem \ref{thm:rate_cog}}\label{sec:Proof_Thm_cog}
In this section we provide a detailed proof for Theorem \ref{thm:rate_cog}.  We also discuss the choice of the fine lattices $\Lambda_k$ introduced in \ref{subsec:LatticeCodes}.  The encoding procedure has been discussed in section \ref{sec:MainResult},  now we consider the decoding procedure at the primary user. The received signal $\ve y_0$ at the primary decoder is
\begin{IEEEeqnarray}{rCl}
\ve y_0&=&\ve x_0+\sum_{k\geq 1}b_k\ve x_k+\ve z_0\\
&=&(1+\sum_{k\geq 1}b_k\sqrt{\lambda_k})\ve x_0+\sum_{k\geq 1}b_k\ve{\hat x}_k+\ve z_0\\
&=&b_0 \ve{ x}_0+\sum_{k\geq 1}b_k\ve{\hat x}_k+\ve z_0
\label{eq:y_0}
\end{IEEEeqnarray}
where we define $b_0:=1+\sum_{k\geq 1}b_k\sqrt{\lambda_k}$. 

Given a set of integers $\ve a(1):=\{a_k(1)\in\mathbb Z,k\in[0:K]\}$ and some scalar $\alpha_1	\in\mathbb R$, the primary decoder  can form the following:
\begin{IEEEeqnarray*}{rCl}
\ve{\tilde y}_0^{(1)}&=&\alpha_1 \ve y_0-\sum_{k\geq 0}a_k(1)\beta_k\ve d_k \\
&=&(\alpha_1b_0-a_0(1)\beta_0)\ve x_0+  \sum_{k\geq 1}(\alpha_1 b_k -a_k(1)\beta_k)\ve{\hat  x}_k+\alpha_1\ve z_0\\
&&+\sum_{k\geq 1}a_k(1)\beta_k\ve{\hat x}_k+a_0(1)\beta_0\ve x_0-\sum_{k\geq 0}a_k(1)\beta_k\ve d_k.
\end{IEEEeqnarray*}
Rewrite the last three terms in the above expression as
\begin{IEEEeqnarray}{rCl}
&&\sum_{k\geq 1}a_k(1)\beta_k\ve{\hat x}_k+a_0(1)\beta_0\ve x_0-\sum_{k\geq 0}a_k(1)\beta_k\ve d_k\nonumber\\
&\stackrel{(b)}{=}&\sum_{k\geq 1}a_k(1)\left(\beta_k(\frac{\ve t_k}{\beta_k}-\frac{\gamma_k\ve x_0}{\beta_k})-\beta_kQ_{\frac{\Lambda_k^s}{\beta_k}}(\frac{\ve t_k}{\beta_k}+\ve d_k-\frac{\gamma_k\ve x_0}{\beta_k})\right) \nonumber\\
&&+a_0(1)\left(\beta_0\ve t_0-\beta_0Q_{\frac{\Lambda_0^s}{\beta_0}}(\frac{\ve t_0}{\beta_0}+\ve d_0) \right)\nonumber\\
&\stackrel{(c)}{=}&-\sum_{k\geq 1}a_k(1)\gamma_k\ve x_0+a_0(1)(\ve t_0-Q_{\Lambda_0^s}(\ve t_0+\beta_0\ve d_0))
 \nonumber\\
&&+ \sum_{k\geq 1}a_k(1)\left(\ve t_k-Q_{\Lambda_k^s}(\ve t_k+\beta_k\ve d_k-\gamma_k\ve x_0)\right) \nonumber\\
&\stackrel{(d)}{=}&-\sum_{k\geq 1}a_k(1)\gamma_k\ve x_0+\sum_{k\geq 0}a_k(1)\ve{\tilde t}_k. \label{eq:tmp1}
\end{IEEEeqnarray}
In step $(b)$ we used the definition of the signals $\ve x_0$ and $\ve{\hat x}_k$ from Eqn. (\ref{eq:x0_xk}). Step $(c)$ uses the identity  $Q_{\Lambda}(\beta \ve x)=\beta Q_{\frac{\Lambda}{\beta}}(\ve x)$ for any real number $\beta\neq 0$.  In step $(d)$ we define $\ve {\tilde t}_k$ for user $k$ as
\begin{IEEEeqnarray}{rCl}
\ve {\tilde t}_0&:=&\ve t_0-Q_{\Lambda_0^s}(\ve t_0+\beta_k\ve d_0)\\
\ve {\tilde t}_k&:=&\ve t_k-Q_{\Lambda_k^s}(\ve t_k+\beta_k\ve d_k-\gamma_k\ve x_0)\quad k\in[1:K].
\label{eq:tilde_t_k}
\end{IEEEeqnarray}
Define $g(1):=\sum_{k\geq 1}a_k(1)\gamma_k$ and substitute the expression (\ref{eq:tmp1}) into $\tilde{\ve y}_0^{(1)}$ to get
\begin{IEEEeqnarray}{rCl}
\tilde{\ve y}_0^{(1)}&=&\left(\alpha_1b_0-a_0(1)\beta_0-g(1)\right)\ve x_0+\sum_{k\geq 1}(\alpha_1 b_k -a_k(1)\beta_k)\ve{\hat  x}_k \nonumber\\
&&+\alpha_1\ve z_0+\sum_{k\geq 0}a_k(1)\tilde{\ve t}_k\nonumber\\
&=&\tilde{\ve z}_0(1)+\sum_{k\geq 0}a_k(1)\tilde{\ve t}_k
\label{eq:y0_tilde}
\end{IEEEeqnarray}
where we define the equivalent noise $\ve{\tilde z}_0(1)$  at the primary receiver as:
\begin{IEEEeqnarray}{rCl}
\ve {\tilde z}_0(1)&:=&\alpha_1 \ve z_0+(\alpha_1 b_0-a_0(1)\beta_0-g(1))\ve x_0 \nonumber\\
&&+\sum_{k\geq 1} (\alpha_1 b_k-a_k(1)\beta_k)\ve {\hat x}_k
\label{eq:z_0_1_tilde}
\end{IEEEeqnarray}
where $b_0:=1+\sum_{k\geq 1}b_k\sqrt{\lambda_k}$.

Notice that  we have $\ve{\tilde t}_k\in\Lambda_k$ since $\ve t_k\in\Lambda_k$  and  $\Lambda_k^s\subseteq\Lambda_c$ due to the lattice code construction (recall that $\Lambda_c$ denotes the coarsest lattice among all $\Lambda_k$ for $k\in[0:K]$). Furthermore because all $\Lambda_k$ are chosen to form a nested lattice chain,  the integer combination $\sum_{k\geq 0}a_k(1)\ve{\tilde t}_k$ also belongs to the finest lattice  among all $\Lambda_k$ with $a_k(1)\neq 0$. We denote this finest lattice as $\Lambda_f$, i.e., $\Lambda_k\subseteq\Lambda_f$ for all $k\in[0:K]$ satisfying $a_k(1)\neq 0$.  Furthermore, the equivalent noise $\ve{\tilde z}_0(1)$ is independent of the signal $\sum_{k\geq 0}a_k(1)\ve{\tilde t}_k$ thanks to the dithers $\ve d_k$.

The primary decoder performs \textit{lattice decoding} to decode the integer sum $\sum_{k\geq 0}a_k(1)\ve{\tilde t_k}$ by quantizing $\ve{\tilde y}_0^{(1)}$ to its nearest neighbor in $\Lambda_f$. A decoding error occurs when $\ve{\tilde y}_0^{(1)}$ falls outside the Voronoi region around the lattice point $\sum_{k\geq 0}a_k(1)\ve{\tilde t_k}$. The probability of this event is equal to the probability that the equivalent noise $\ve{\tilde z}_0(1)$ leaves the Voronoi region of the finest lattice, i.e., $\mbox{Pr}(\ve{\tilde z}_0(1)\notin\mathcal V_f)$ where $\mathcal V_f$ denotes the Voronoi region of $\Lambda_f$. The same as in the proof of \cite[Theorem 5]{NazerGastpar_2011}, the probability $\mbox{Pr}(\ve{\tilde z}_0(1)\notin\mathcal V_f)$ goes to zero if the probability $\mbox{Pr}(\ve z_0^*(1)\notin\mathcal V_f)$ goes to zero where $\ve z_0^*(1)$ is a zero-mean Gaussian vector with i.i.d entries whose variance equals the variance of the noise $\ve{\tilde z}_0(1)$:
\begin{IEEEeqnarray*}{rCl}
N_0(1)&=&\alpha_1^2+(\alpha_1 b_0-a_0(1)\beta_0-g(1))^2P \\
&&+\sum_{k\geq 1}(\alpha_1 b_k-a_k(1)\beta_k)^2\bar\lambda_kP.
\end{IEEEeqnarray*}

By the AWGN goodness property (Definition \ref{def:AWGNGood}) of $\Lambda_f$, the probability $\mbox{Pr}(\ve z_0^*(1)\notin\mathcal V_f)$ goes to zero exponentially if
\begin{IEEEeqnarray}{rCl}
\frac{(\vol(\mathcal V_f))^{2/n}}{N_0(1)}>2\pi e.
\label{eq:decoding_fine_lattice}
\end{IEEEeqnarray}
Since $\Lambda_f$ is the finest lattice in the nested lattice chain formed by $\Lambda_k, k\in[0:K]$ satisfying $a_k(1)\neq 0$, namely
\begin{align*}
\vol(\mathcal V_f)=\min_{k\in[0:K], a_k(1)\neq 0} \vol(\mathcal V_k),
\end{align*}
the inequality in (\ref{eq:decoding_fine_lattice}) holds if it holds that
\begin{IEEEeqnarray}{rCl}
\frac{(\vol(\mathcal V_k))^{2/n}}{N_0(1)}>2\pi e.
\end{IEEEeqnarray}
for all $k\in[0:K]$ satisfying $a_k(1)\neq 0$.  Hence using the rate expression 
\begin{IEEEeqnarray}{rCl}
R_k=\frac{1}{n}\log\frac{\vol(\mathcal V_k^s)}{\vol{\mathcal (\mathcal V_k)}} 
\label{eq:V_k_1sum}
\end{IEEEeqnarray}
we see the error probability goes to zero, or equivalently (\ref{eq:decoding_fine_lattice}) holds,  if
\begin{IEEEeqnarray}{rCl}
2^{2R_k}\leq \frac{(\vol(\mathcal V_k^s))^{2/n}}{2\pi e N_0(1)}
\end{IEEEeqnarray}
for all $k\in[0:K]$ satisfying $a_k(1)\neq 0$.  For Tx $k$ with $a_k(1)=0$, decoding this integer sum will not impose any constraint on the rate $R_k$.

Recalling the fact that $\Lambda_k^s$ is good for quantization (Definition \ref{def:QuantizationGood}), we have 
\begin{align}
\frac{\sigma_k^2}{(\vol(\mathcal V_k^s))^{2/n}}<\frac{(1+\delta)}{2\pi e}
\end{align}
for any $\delta>0$. We conclude that  lattice decoding will be successful if 
\begin{IEEEeqnarray}{rCl}
R_k< r_k(\ve a_1,\underline{\lambda},\underline{\beta},\underline{\gamma}):=\frac{1}{2}\log\frac{\sigma_k^2}{N_0(1)}-\frac{1}{2}\log(1+\delta)
\label{eq:R_k_OneEq}
\end{IEEEeqnarray}
that is
\begin{subequations}
\begin{align}
R_0&<\frac{1}{2}\log^+\left(\frac{\beta_0^2P}{\alpha_1^2+P\norm{\alpha_1\ve h-\ve{\tilde a}}^2}\right)\\
R_k&<\frac{1}{2}\log^+\left(\frac{(1-\lambda_k)\beta_k^2P}{\alpha_1^2+P\norm{\alpha_1\ve h-\ve{\tilde a}}^2}\right)\quad k\in[1:K]
\end{align}
\end{subequations}
if we choose $\delta$ arbitrarily small and define
\begin{IEEEeqnarray*}{rCl}
\ve h&:=&[b_0,b_1\sqrt{\bar\lambda_1},\ldots,b_K\sqrt{\bar\lambda_K}]\\
\ve{\tilde a}&:=&[a_0(1)\beta_0+g(1),a_1(1)\beta_1\sqrt{\bar\lambda_1},\ldots,a_K(1)\beta_K\sqrt{\bar\lambda_K}].
\end{IEEEeqnarray*}
Notice we can optimize over $\alpha_1$ to maximize the  above rates.

At this point, the primary user has  successfully decoded one integer sum of the lattice points $\sum_{k\geq 0}a_k\ve{\tilde t}_k$.  As mentioned earlier, we may  continue decoding other integer sums with the help of this sum. The method of performing \textit{successive compute-and-forward} in \cite{Nazer_2012} is to first recover a linear combination of all transmitted signals $\tilde{\ve x}_k$ from the decoded integer sum and use it for subsequent decoding. Here we are not able to do this because the cognitive channel input $\hat{\ve x}_k$ contains $\ve x_0$ which is not known at Receiver $0$.   In order to proceed, we use the observation that if $\sum_{k\geq 0}a_k\ve{\tilde t}_k$ can be decoded reliably, then we know the equivalent noise $\tilde{\ve z}_0(1)$ and can use it for the subsequent decoding.

In general assume the primary user has decoded $\ell-1$ integer sums $\sum_ka_k(j)\ve t_k$, $j\in[1:\ell-1], \ell\geq 2$ with positive rates, and about to decode another integer sum with coefficients $\ve a(\ell)$. 
We show in  Appendix \ref{app:SIC}  that with the previously known $\tilde{\ve z}_0(\ell-1)$ for $\ell\geq 2$, the primary decoder can form
\begin{IEEEeqnarray}{rCl}
\ve {\tilde  y}_0^{(\ell)}
&=&\ve{\tilde z}_0(\ell)+\sum_{k\geq 0}a_k(\ell)\ve{\tilde t}_k
\end{IEEEeqnarray}
with the equivalent noise $\ve{\tilde z}_0(\ell)$
\begin{IEEEeqnarray}{rCl}
\ve{\tilde  z}_0(\ell)&&:=\alpha_\ell\ve z_0+\sum_{k\geq 1}\left( \alpha_\ell b_k-a_k(\ell)\beta_k-\sum_{j=1}^{\ell-1}\alpha_j a_k(j)\beta_k\right)\hat{\ve x}_k\nonumber \\
&&+\left(\alpha_\ell b_0-a_0(\ell)\beta_0-\sum_{j=1}^{\ell-1}\alpha_ja_0(j)\beta_0-g(\ell)\right)\ve x_0
\label{eq:z_0_l_tilde}
\end{IEEEeqnarray}
where $g(\ell)$ is defined in (\ref{eq:g_l}) and the scaling factors $\alpha_1,\ldots,\alpha_\ell$ are to be optimized.

In the same vein as  we derived  (\ref{eq:R_k_OneEq}), using $\ve{\tilde y}_0^{(l)}$ we can decode the  integer sums of the lattice codewords $\sum_{k\geq 0}a_k(\ell)\ve {\tilde t_0}$ reliably  using lattice decoding if the fine lattice satisfy
\begin{IEEEeqnarray}{rCl}
\frac{(\vol(\mathcal V_k))^{2/n}}{N_0(\ell)}>2\pi e
\label{eq:V_k_lsum}
\end{IEEEeqnarray}
for $k$ satisfying $a_k(\ell)\neq 0$ and we use $N_0(\ell)$ to denote the  variance of the equivalent noise $\ve{\tilde z}_0(\ell)$  per dimension given in (\ref{eq:N_0_l}). Equivalently we require the rate $R_k$ to be smaller than
\begin{IEEEeqnarray}{rCl}
r_k(\ve a_{\ell|1:\ell-1},\underline{\lambda},\underline{\beta},\underline{\gamma}):=\max_{\alpha_1,\ldots,\alpha_\ell\in\mathbb R}\frac{1}{2}\log^{+}\left(\frac{\sigma_k^2}{N_0(\ell)}\right)
\end{IEEEeqnarray}
where $\sigma_k^2$ is given in (\ref{eq:sigma_sqr}). Thus we arrive at the same expression   in (\ref{eq:Rk_SIC}) as claimed.

Recalling the definition of the set $\mathcal A(L)$ in (\ref{eq:valid_coefficients}), we now show that if the coefficient matrix $A$ is in this set, the term $\ve{\tilde t}_0$ can be solved using the $L$ integer sums with coefficients $\ve a(1),\ldots,\ve a(L)$.

For the case $\mbox{rank}(\ve A)=K+1$ the statement is trivial. For the case $\mbox{rank}(\ve A)=m\leq L<K+1$, we know that by performing Gaussian elimination on $\ve A'\in \mathbb Z^{L\times K}$ with rank $m-1$,  we obtain a matrix whose last $L-m+1$ rows are zeros. Notice that $\ve A\in\mathbb Z^{L\times K+1}$ is a matrix formed by adding one more column in front of $\ve A'$. So if we perform  exactly the same Gaussian elimination procedure on the matrix $\ve A$,  there must be at least one row in the last $L-m+1$ row whose first entry is non-zero, since $\text{rank}(\ve A)=\text{rank}(\ve A')+1$. This row will give the value of $\ve {\tilde t}_0$. Finally the true codeword $\ve t_0$ can be recovered as 
\begin{align}
\ve t_0=[\tilde{\ve t}_0]\mode\Lambda_0^s.
\end{align}

Now we consider the decoding procedure at  cognitive receivers, for whom it is just a point-to-point transmission problem over Gaussian channel using lattice codes. The cognitive user $k$  processes its received signal for some $\nu_k$ as
\begin{IEEEeqnarray*}{rCl}
\ve {\tilde y}_k&=&\nu_k\ve y_k-\beta_k\ve d_k\\
&=&\nu_k(\ve z_k+\sqrt{\lambda_k}h_k\ve x_0)+(\nu_kh_k-\beta_k)\ve{\hat x}_k+\beta_k\ve{\hat x}_k-\beta_k\ve d_k\\
&\stackrel{}{=}&\nu_k(\ve z_k+\sqrt{\lambda_k}h_k\ve x_0)+(\nu_kh_k-\beta_k)\ve{\hat x}_k-\beta_k\ve d_k\\
&&+Q_{\Lambda_k^s}(\ve t_k+\beta_k\ve d_k-\gamma_k\ve x_0)+\beta_k(\frac{\ve t_k}{\beta_k}+\ve d_k-\frac{\gamma_k}{\beta_k}\ve x_0)\\
&=&\ve{\tilde z}_k+\tilde{\ve t}_k.
\end{IEEEeqnarray*}
In the last step we define the equivalent noise as
\begin{IEEEeqnarray}{rCl}
\ve{\tilde z}_k&:=&\nu_k\ve z_k+(\nu_kh_k-\beta_k)\ve{\hat x}_k+(\nu_k\sqrt{\lambda_k}h_k-\gamma_k)\ve x_0
\end{IEEEeqnarray}
and $\tilde{\ve t}_k$ as in (\ref{eq:tilde_t_k}).

Using the same argument as before,  we can show that the codeword $\tilde{\ve t}_k$ can be decoded reliably using lattice decoding if
\begin{IEEEeqnarray}{rCl}
\frac{(\vol(\mathcal V_k))^{2/n}}{N_k(\gamma_k)}>2\pi e
\label{eq:V_k_cog}
\end{IEEEeqnarray}
for all $k\geq 1$ where $N_k(\gamma)$ is the variance of the equivalent noise $\tilde{\ve z}_k$ per dimension given in (\ref{eq:N_k}).  Equivalently the cognitive rate $R_k$ should satisfy
\begin{IEEEeqnarray}{rCl}
R_k&<&\max_{\nu_k}\frac{1}{2}\log\frac{\sigma_k^2}{N_k(\gamma_k)}.
\end{IEEEeqnarray}
Similarly we can obtain $\ve t_k$ from $\tilde{\ve t}_k$ as $\ve t_k=[\tilde{\ve t}_k]\mode\Lambda_k^s$. This completes the proof of Theorem \ref{thm:rate_cog}.

We also determined how to choose the fine lattice $\Lambda_k$. Summarizing the requirements in (\ref{eq:V_k_cog}) and (\ref{eq:V_k_lsum}) on $\Lambda_k$ for  successful decoding, the fine lattice $\Lambda_0$ of the primary user  satisfies
\begin{IEEEeqnarray}{rCl}
(\vol(\mathcal V_0))^{2/n}>2\pi e N_0(\ell)
\end{IEEEeqnarray}
for all $\ell$ where $a_0(\ell)\neq 0$ and the fine lattice $\Lambda_k$ of the cognitive user $k$, $k\in[1:K]$,  satisfies
\begin{IEEEeqnarray}{rCl}
(\vol(\mathcal V_k))^{2/n}>\max\{2\pi e N_0(\ell), 2\pi e N_k(\gamma_k)\}
\end{IEEEeqnarray}
for all $\ell$ where $a_k(\ell)\neq 0$. As mentioned in Section \ref{subsec:LatticeCodes}, the fine lattices $\Lambda_k$ are chosen to form a nested lattice chain. Now the order of this chain can be determined by the volumes of $\mathcal V_k$ given above.

\subsection{Symmetric Cognitive Many-to-One Channels}\label{sec:Symmetric}
As we have seen in  Section \ref{sec:optimal_A}, it is in general difficult to describe the optimal coefficient matrix $A$. However we can give a partial answer to this question if we focus on one simple class of  many-to-one channels. In this section we consider a symmetric system with $b_k=b$ and $h_k=h$ for all $k\geq 1$ and  the case when all cognitive users have the same rate, i.e., $R_k=R$ for $k\geq 1$. By symmetry  the parameters $\lambda_k$, $\beta_k$ and $\gamma_k$ should be the same for all $k\geq 1$.  In this symmetric setup,  one simple observation can be made regarding the optimal number of integer sums $L$ and the  coefficient matrix $\ve A$.
\begin{lemma}
For the symmetric many-to-one cognitive interference channel, we need to decode at most two integer sums,  $L\leq 2$.  Furthermore, the optimal coefficient matrix is one of the following two matrices:
\begin{IEEEeqnarray}{rCl}
\ve A_1&=&\begin{pmatrix}
1 &0 &\ldots &0
\end{pmatrix}
\end{IEEEeqnarray}
or
\begin{IEEEeqnarray}{rCl}
\ve A_2&=&\begin{pmatrix}
c_0 &c &\ldots &c\\
0 &1 &\ldots &1
\end{pmatrix}
\end{IEEEeqnarray}
for some integer $c_0$ and nonzero integer $c$.
\label{lemma:optimal_coef}
\end{lemma}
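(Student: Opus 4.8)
The plan is to exploit the permutation symmetry of the channel to collapse the $(K+1)$-dimensional coefficient optimization of Proposition~\ref{prop:optimization} onto a two-dimensional subspace, after which the rank structure built into $\mathcal A(L)$ forces $L\le 2$. I would start from the equivalent formulation in Proposition~\ref{prop:optimization}, where optimizing the symmetric rate amounts to minimizing $\max_{\ell\in\mathcal L_k}\norm{\ve B_\ell^{1/2}\ve a(\ell)}$, or equivalently, through (\ref{eq:N_0_l_compact}), keeping every effective noise $N_0(\ell)=\alpha_\ell^2+\norm{\alpha_\ell\ve h-\tilde{\ve a}_\ell-\sum_{j<\ell}\alpha_j\tilde{\ve a}_j}^2P$ small. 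In the symmetric setting, by (\ref{eq:h_a}) the channel vector $\ve h=[b_0,b\sqrt{\bar\lambda},\ldots,b\sqrt{\bar\lambda}]$ lies entirely in the two-dimensional subspace $\mathcal S:=\mathrm{span}\{\ve e_0,\ve 1\}$, where $\ve e_0=[1,0,\ldots,0]$ and $\ve 1=[1,\ldots,1]$.

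The first key step is to show that, without loss of optimality, each decoded sum may be taken to have equal cognitive coefficients, i.e. $a_1(\ell)=\cdots=a_K(\ell)$. I would prove this by splitting the vector inside the norm of $N_0(\ell)$ into its projections onto $\mathcal S$ and $\mathcal S^\perp$. Since $\ve h\in\mathcal S$, the component in $\mathcal S^\perp$ equals $-\Pi_{\mathcal S^\perp}(\tilde{\ve a}_\ell+\sum_{j<\ell}\alpha_j\tilde{\ve a}_j)$ and contributes a nonnegative term $\norm{\Pi_{\mathcal S^\perp}(\tilde{\ve a}_\ell+\sum_{j<\ell}\alpha_j\tilde{\ve a}_j)}^2P$ to $N_0(\ell)$. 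By (\ref{eq:h_a}), $\tilde{\ve a}_j\in\mathcal S$ precisely when its cognitive entries are all equal, so equalizing the cognitive coefficients annihilates this orthogonal penalty and can only lower every $N_0(\ell)$, hence only raise the rates $r_k$; the cognitive-receiver term in (\ref{eq:R_k_cog}) does not involve $\ve A$ and is untouched. The same conclusion has a transparent operational reading: because $b_k=b$, the primary observation is $\ve y_0=b_0\ve x_0+b\sum_{k\ge 1}\hat{\ve x}_k+\ve z_0$, so the cognitive codewords enter only through their aggregate, and the decoder can usefully recover only combinations of $\ve t_0$ and $S:=\sum_{k\ge 1}\ve t_k$.

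Given equal cognitive coefficients, I would finish with a rank argument. Each row then reads $\ve a(\ell)=[c_0(\ell),c(\ell),\ldots,c(\ell)]=c_0(\ell)\ve e_0+c(\ell)(\ve 1-\ve e_0)$, so every cognitive part is a multiple of $\ve 1$ and hence $\mathrm{rank}(\ve A')\le 1$. The defining relation $\mathrm{rank}(\ve A')=m-1$ in (\ref{eq:valid_coefficients}) then forces $m\le 2$, and retaining a maximal linearly independent set of rows leaves $L=m\le 2$ rows while preserving $\mathrm{rank}(\ve A)=m$ and $\mathrm{rank}(\ve A')=m-1$; any further rows only impose additional noise constraints $N_0(\ell)$ and never raise the rate, so $L\le 2$. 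For $m=1$ the relation $\mathrm{rank}(\ve A')=0$ forces $c(1)=0$, and minimizing $\norm{\ve B_1^{1/2}\ve a(1)}$ over the scale of the single surviving coefficient selects $c_0(1)=1$, yielding $\ve A_1=[1,0,\ldots,0]$. For $m=2$, taking one purely cognitive sum $[0,1,\ldots,1]$ with minimal coefficient to strip off the aggregate $S$, together with one mixed sum $[c_0,c,\ldots,c]$ aligned with $\ve h$ to carry $\ve t_0$, renders the two rows independent (so $\ve t_0$ is recoverable) while minimizing the coefficient norms, giving $\ve A_2$.

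The step I expect to be the main obstacle is making the symmetrization of the second paragraph fully rigorous in the integer regime: the orthogonal-projection inequality shows equal coefficients never increase the noise in the real relaxation, but the $\mathcal S$-projection of an unequal integer vector need not be realizable by an integer equal-coefficient vector, so one cannot simply perturb a given optimal matrix. I would close this gap by arguing directly at the level of the $\mathbb F_p$ row space spanned by the two effective symbols $(\ve t_0,S)$—where recoverability of $\ve t_0$ is exactly the statement that adjoining the first column raises the rank by one—and by invoking the sufficient-statistic structure $\ve y_0=b_0\ve x_0+b\,\hat S+\ve z_0$, so that the symmetrized matrix is constructed to lie in $\mathcal A(L')$ by design rather than obtained by rounding.
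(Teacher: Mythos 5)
Your proposal is correct and takes essentially the same route as the paper's proof: reduce maximizing the rates over $\ve A$ to minimizing the effective noise $N_0(\ell)$, use the channel symmetry to restrict to rows with equal cognitive coefficients (your projection onto $\mathrm{span}\{\ve e_0,\ve 1\}$ is a geometric rephrasing of the paper's ``the expression is symmetric in $a_k(\ell)$'' argument), and then conclude $L\le 2$ from the fact that such rows force $\mathrm{rank}(\ve A')\le 1$, which is a sharper version of the paper's informal remark that any third equal-coefficient sum would be redundant. The integer-realizability issue you flag at the end is genuine, but it is present in the paper's own proof as well --- the paper simply asserts that symmetry of $N_0(1)$ in the integer variables $a_k(1)$ yields an equal-coefficient minimizer, an inference that is not automatic over $\mathbb Z^K$ --- so your treatment is, if anything, more explicit than the published one.
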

\begin{proof}
For given $\underline\lambda$, $\underline\beta$ and $\underline{\gamma}$, to maximize the rate $R_k$ with respect to $\ve A$ is the same as to minimize the equivalent noise variance $N_0(\ell)$ in (\ref{eq:N_0_l}). We write out $N_0(1)$ for decoding the first equation ($\ell=1$) with $\beta_k=\beta$, $\lambda_k=\lambda$ and $\gamma_k=\gamma$ for all $k\geq 1$:
\begin{IEEEeqnarray*}{rCl}
N_0(1)&=&\alpha_1^2+\sum_{k\geq 1}\left(\alpha_1 b-a_k(1)\beta\right)^2\bar\lambda P +(\alpha_1b_0-a_0(1)\beta_0\\
&&-\gamma\sum_{k\geq 1}a_k(1))^2P
\end{IEEEeqnarray*}

The above expression is symmetric on $a_k(1)$ for all $k\geq 1$ hence the minimum is obtained by letting all $a_k(1)$ be the same. It is easy to see that the same argument holds when we induct on $\ell$, i.e., for any $\ell\in[1:L]$, the minimizing $a_k(\ell)$ is the same  for $k\geq 1$.  Clearly $\ve A_1$ and $\ve A_2$ satisfy this property.

To see why we need at most two integer sums: the case with $\ve A_1$ when the primary decoder decodes one sum is trivial; now consider when it decodes two sums with the coefficients matrix $\ve A_2$. First observe that $\ve A_2$ is in the set $\mathcal A(2)$, meaning we can solve for $\ve t_0$. Furthermore, there is no need to decode a third sum with $a_k(3)$ all equal for $k\geq 1$, because any other sums of this form can be constructed by using the two sums we already have. We also mention that the coefficient matrix
\begin{IEEEeqnarray}{rCl}
\ve A_3&=&\begin{pmatrix}
c_0 &c &\ldots &c\\
1 &0 &\ldots &0
\end{pmatrix}
\end{IEEEeqnarray}
is also a valid choice and will give the same result as $\ve A_2$.
\end{proof}

Now we give some numerical results comparing the proposed scheme with the  conventional schemes proposed in Section \ref{sec:ConventionalCoding} for the symmetric cognitive many-to-one channels.

Figure \ref{fig:rate_region} shows the achievable rate region for a symmetric cognitive many-to-one channel. The dashed and dot-dash lines are achievable regions with DPC in Proposition \ref{prop:DPC} and SND at Rx $0$ in Proposition \ref{prop:JointDecoding}, respectively.   The solid line depicts the  rate region using the proposed scheme in Theorem \ref{thm:rate_cog}. Notice the achievable rates based on the simple  conventional schemes in Proposition \ref{prop:DPC}  and \ref{prop:DPC} are not much better than the trivial time sharing scheme in the multi-user scenario,  due to their inherent inefficiencies on interference suppression. On the other hand, the proposed scheme based on structured codes performs interference alignment in the signal level, which gives better interference mitigation ability at the primary receiver. The effect is emphasized more when we study the non-cognitive system in Section \ref{sec:non-cognitive}. The outer bound in Figure \ref{fig:rate_region} is obtained by considering the system as a two-user multiple-antenna broadcast channel whose capacity region is known. A brief description to this outer bound is given in Appendix  \ref{sec:outer_bound}.

\begin{figure}[!htbp]
\centering
   \includegraphics[scale=0.5]{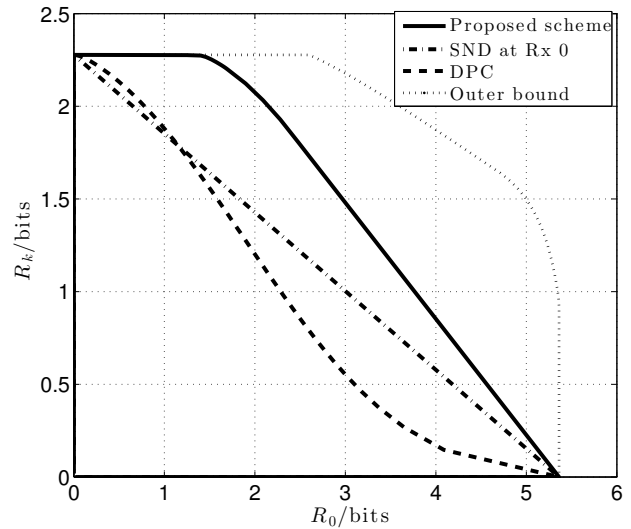}
  \caption{Achievable rate region for a many-to-one symmetric cognitive many-to-one channel  with power $P=10$, channel gain $b_k=4$, $h_k=1.5$ for $k\geq 1$ and $K=3$ cognitive users.The plot compares the different achievable rates for the cognitive many-to-one channel. The horizontal and vertical axis represents the primary rate $R_0$ and cognitive rate $R_k, k\geq 1$, respectively. } 
\label{fig:rate_region}
\end{figure}

It is also instructive to study the system performance as a function of the channel gain $b$.  We consider a symmetric channel with $h$ fixed and varying value of $b$. For different values of $b$,  we  maximize the symmetric rate $R_{sym}:=\min\{R_0,R\}$ where $R=R_k$ for $k\geq 1$ by choosing optimal $\ve A, \underline{\lambda}$ and $\underline{\beta}$, i.e., 
\begin{IEEEeqnarray}{rCl}
  \max_{\substack{\ve A\in \mathcal A(2)\\ \underline{\lambda}, \underline{\beta} }}\min\bigg\{&& \min_{\ell\in\mathcal L_0}r_0(\ve a_{\ell|1:\ell-1}), \min_{\ell\in\mathcal L_k}r_k(\ve a_{\ell|1:\ell-1}), \nonumber\\
  && \max_{\nu_k\in\mathbb R}\frac{1}{2}\log^+\frac{\sigma_k^2}{N_k(\gamma_k)}\bigg\}
\end{IEEEeqnarray}
where the first term is the rate of the primary user and the minimum of the second and the third term is the rate of  cognitive users. Notice  $\lambda_k, \beta_k, r_k(\ve a_{\ell|1:\ell-1})$ are the same for all $k\geq 1$ in this symmetric setup.  Figure \ref{fig:SymRate} shows the maximum symmetric rate of different schemes with increasing $b$.

\begin{figure}[!htbp]
   \centering
   \includegraphics[scale=0.6]{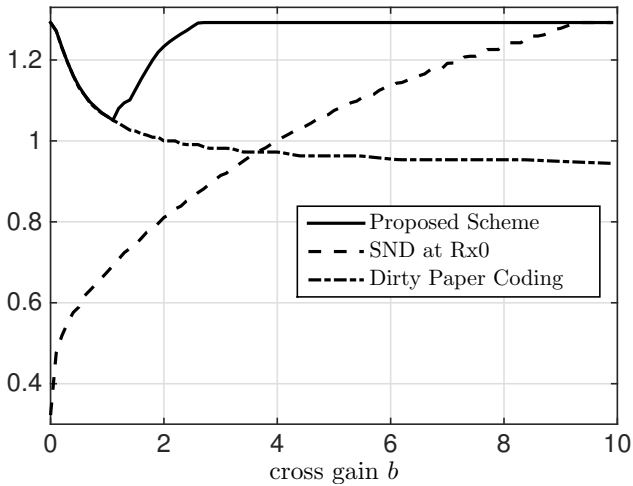}
   \caption{The maximum symmetric rates $R_{sym}$ of different schemes for a many-to-one cognitive interference network with power $P=5$ and $K=3$ cognitive users where $R_k=R$ for $k\geq 1$. We set $h=1$ and vary the cross channel gain $b$ in the interval $[0: 10]$. Notice the maximum symmetric rate is upper bounded by $\frac{1}{2}\log(1+h^2P)$.  We see the proposed scheme performs better than the other two schemes in general. When the interference becomes larger, the proposed scheme quickly attains the maximum symmetric rate.  The joint decoding method approaches the maximum symmetric rate much slower, since it requires the cross channel gain to be sufficiently large such that the primary decoder can (nonuniquely) decode all the messages of the cognitive users. The dirty paper coding approach cannot attain the maximum symmetric rate since the primary decoder treats interference as noise. }
   \label{fig:SymRate}
 \end{figure}

\section{Non-cognitive many-to-one channels}\label{sec:non-cognitive}
As an interesting special case of the cognitive many-to-one channel, in this section we will study the non-cognitive many-to-one channels  where  user $1,\ldots, K$ do not have access to the message $W_0$ of User $0$.  The many-to-one interference channel has also been studied, for example, in \cite{Bresler_etal_2010}, where several constant-gap results are obtained. Using the coding scheme introduced here, we are able to give some refined result to this channel in some special cases.

It is straightforward to extend the coding scheme of  the cognitive channel to the non-cognitive channel by letting users $1,\ldots K$ not split the power for the message $W_0$ but to transmit their own messages only. The achievable rates are the same as in Theorem \ref{thm:rate_cog} by setting all power splitting parameters $\lambda_k$ to be zero and $\gamma_k$ to be zero because $\ve x_0$ will not be interference to cognitive users.  Although it is a straightforward exercise to write out the achievable rates, we still state the result formally here.

\begin{theorem}
For any given positive numbers $\underline\beta$ and coefficient matrix $\ve A\in\mathcal A(L)$ in (\ref{eq:valid_coefficients}) with $L\in[1:K+1]$, define $\mathcal L_k:=\{\ell\in[1:L]|a_k(\ell)\neq 0\}$.  If $r_k(\ve a_{\ell|1:\ell-1},\underline{\lambda},\underline{\beta},\underline{\gamma})>0$ for all $\ell\in\mathcal L_k$, $k\in[0:K]$,  then the following rate is achievable for the many-to-one  interference channel
\begin{IEEEeqnarray}{rCl}
R_0&\leq& \min_{\ell\in\mathcal L_0}\tilde r_0(\ve a_{\ell|1:\ell-1},\underline{\beta})\IEEEyessubnumber \label{eq:R_0_noncog}\\
R_k&\leq &\min\bigg\{\frac{1}{2}\log\left(1+h_k^2P\right),\min_{\ell\in\mathcal L_k}\tilde r_k(\ve a_{\ell|1:\ell-1},\underline{\beta})\bigg\} \IEEEyessubnumber\label{eq:R_k_noncog}
\end{IEEEeqnarray}
for $k\in[1:K]$ with 
\begin{align}
\tilde r_k(\ve a_{\ell|1:\ell-1},\underline{\beta}):=\max_{\alpha_1,\ldots,\alpha_\ell\in\mathbb R}\frac{1}{2}\log^{+}\left(\frac{\beta_k^2 P}{\tilde N_0(\ell)}\right)
 \label{eq:Rk_SIC_noncog}
\end{align}
where $\tilde N_0(\ell)$ is defined as
\begin{IEEEeqnarray}{rCl}
\tilde N_0(\ell)&:=&\alpha_\ell^2+\sum_{k\geq 1}\left(\alpha_\ell b_k-a_k(\ell)\beta_k-\sum_{j=1}^{\ell-1}\alpha_ja_k(j)\beta_k\right)^2 P \nonumber\\
&&+\left(\alpha_\ell-a_0(\ell)\beta_0-\sum_{j=1}^{\ell-1}\alpha_ja_0(j)\beta_0\right)^2P
.\label{eq:N_0_l_noncog}
\end{IEEEeqnarray}
\label{thm:rate_noncog}
\end{theorem}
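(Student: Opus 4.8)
The plan is to obtain Theorem~\ref{thm:rate_noncog} as a direct specialization of Theorem~\ref{thm:rate_cog}, setting $\lambda_k=0$ and $\gamma_k=0$ for every $k\in[1:K]$. The first thing I would check is that this specialization is admissible in the non-cognitive setting. Under $\lambda_k=0$ the channel input becomes $\ve x_k=\sqrt{\lambda_k}\,\ve x_0+\hat{\ve x}_k=\hat{\ve x}_k$, and under $\gamma_k=0$ the encoding map in (\ref{eq:x0_xk}) loses its $\gamma_k\ve x_0/\beta_k$ term, so that $\hat{\ve x}_k=[\ve t_k/\beta_k+\ve d_k]\mode\Lambda_k^s/\beta_k$ depends on $\ve t_k$ alone. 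Hence no cognitive transmitter references $\ve x_0$ (equivalently $W_0$) anymore, which is precisely what the non-cognitive definition demands. The codebook construction, the nested-lattice chain, and the power accounting $\mathbb E\{\norm{\ve x_k}^2\}/n=\bar\lambda_k P=P$ all carry over verbatim.

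Next I would trace the specialization through the rate expressions. Setting $\lambda_k=0$ gives $b_0=1+\sum_{k\geq1}b_k\sqrt{\lambda_k}=1$, $\bar\lambda_k=1$, and $\sigma_k^2=\beta_k^2 P$ for all $k$; setting $\gamma_k=0$ gives $g(\ell)=0$ by (\ref{eq:g_l}). Substituting these into (\ref{eq:N_0_l}) collapses $N_0(\ell)$ exactly into the expression $\tilde N_0(\ell)$ of (\ref{eq:N_0_l_noncog}), whence $r_k(\ve a_{\ell|1:\ell-1})=\tilde r_k(\ve a_{\ell|1:\ell-1},\underline\beta)$. This immediately yields the primary bound (\ref{eq:R_0_noncog}) and the computation part $\min_{\ell\in\mathcal L_k}\tilde r_k$ of the cognitive bound (\ref{eq:R_k_noncog}), since the successive-decoding argument and the solvability of $\ve t_0$ from any $\ve A\in\mathcal A(L)$ in the proof of Theorem~\ref{thm:rate_cog} never used $\lambda_k$ or $\gamma_k$ being nonzero.

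It then remains to recover the direct-channel term $\frac12\log(1+h_k^2P)$ in (\ref{eq:R_k_noncog}). I would start from the cognitive-receiver bound $\max_{\nu_k}\frac12\log^+(\sigma_k^2/N_k(\gamma_k))$ of Theorem~\ref{thm:rate_cog}: with $\lambda_k=\gamma_k=0$ the noise variance (\ref{eq:N_k}) reduces to $N_k=\nu_k^2+(\nu_k h_k-\beta_k)^2P$, so the cognitive receiver faces a plain point-to-point AWGN channel. Optimizing over $\nu_k$ via the MMSE scaling $\nu_k^\ast=\beta_k h_k P/(1+h_k^2P)$ gives $\sigma_k^2/N_k=1+h_k^2P$ and hence the stated $\frac12\log(1+h_k^2P)$.

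I do not anticipate a genuine obstacle here, consistent with the remark that this is a straightforward exercise; the only points demanding care are the two consistency checks above---that the specialized encoders no longer depend on $\ve x_0$ (so the scheme is implementable without cognition) and that the cognitive-receiver bound degenerates cleanly to the single-user lattice rate. Everything else is inherited from the already-established proof of Theorem~\ref{thm:rate_cog}.
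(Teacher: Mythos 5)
Your proposal is correct and matches the paper's own proof: the paper likewise obtains Theorem~\ref{thm:rate_noncog} by specializing the scheme of Theorem~\ref{thm:rate_cog} with $\lambda_k=\gamma_k=0$, noting that the encoders then no longer reference $\ve x_0$, that $N_0(\ell)$ collapses to $\tilde N_0(\ell)$ so $r_k=\tilde r_k$, and that the cognitive-receiver bound reduces to $\frac{1}{2}\log(1+h_k^2P)$. Your explicit verification of the MMSE scaling $\nu_k^\ast=\beta_k h_k P/(1+h_k^2P)$ and of the implementability of the specialized encoders are exactly the consistency checks the paper relies on.
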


\begin{proof}
The proof of this result is almost the same as the proof of Theorem \ref{thm:rate_cog} in Section \ref{sec:Proof_Thm_cog}. The only change in this proof is that the user $1,\ldots, K$ do not split the power to transmit for the primary user and all $\gamma_k$ are set to be zero since $\ve x_0$ will not act as interference to cognitive receivers.   We will use  lattice codes  described in Section \ref{subsec:LatticeCodes} but  adjust the code construction.  Given  positive numbers $\underline\beta$ and a simultaneously good fine lattice $\Lambda$, we choose $K+1$ simultaneously good  lattices such that $\Lambda_k^s\subseteq \Lambda_k$ with second moments $\sigma^2(\Lambda_k^s)=\beta_k^2 P$ for all $k\in[0:K]$.

Each user forms the  transmitted signal as 
\begin{IEEEeqnarray}{rCl}
\ve x_k&=&\left[\frac{\ve t_k}{\beta_k}+\ve d_k\right]\mode\Lambda_k^s/\beta_k,\quad k\in[0:K]
\end{IEEEeqnarray}
The analysis of the decoding procedure at all receivers is the same as in Section \ref{sec:Proof_Thm_cog}. User $0$ decodes integer sums to recover $\ve t_0$ and other users decode their message $\ve t_k$ directly from the channel output using lattice decoding. In fact, the expression $\tilde r_k(\ve a_{\ell|1:\ell-1},\underline{\beta})$ in (\ref{eq:Rk_SIC_noncog}) is the same as $r_k(\ve a_{\ell|1:\ell-1},\underline{\lambda},\underline{\beta},\underline{\gamma})$ in (\ref{eq:Rk_SIC}) by letting $\lambda_k=\gamma_k=0$ in the later expression. Furthermore we have
\begin{align}
\max_{\nu_k\in\mathbb R}\frac{1}{2}\log\frac{\sigma_k^2}{N_k(\gamma_k=0)}=\frac{1}{2}\log(1+h_k^2P)
\end{align}  for any choice of $\beta_k, k\geq 1$.
\end{proof}

For a simple symmetric example, we compare the achievable rate region of the cognitive many-to-one channel  (Theorem \ref{thm:rate_cog}) with the achievable rate region of the non-cognitive many-to-one channel (Theorem \ref{thm:rate_noncog}) in Figure \ref{fig:cog_vs_noncog}. The parameters are the same  for both channel. This shows the usefulness of the cognitive messages in the system.

\begin{figure}[!htbp]
\centering
   \includegraphics[scale=0.5]{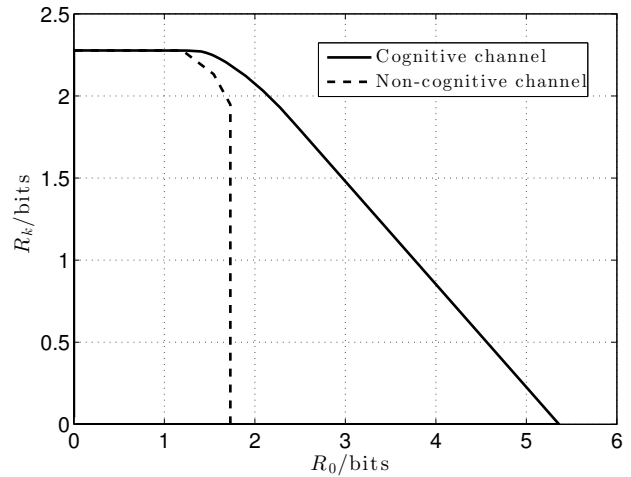}
  \caption{A many-to-one symmetric  interference channel with power $P=10$, channel gain $b_k=4$, $h_k=1.5$ for $k\geq 1$ and $K=3$ cognitive users. This plot compares the different achievable rate regions for the cognitive and non-cognitive channel.  The horizontal and vertical axis represents the primary rate $R_0$ and cognitive rate $R_k, k\geq 1$, respectively. The rate region for the cognitive channel given by Theorem \ref{thm:rate_cog} is plotted in solid line. The dashed line gives the achievable rate region  in Theorem \ref{thm:rate_noncog} for the non-cognitive many-to-one channel. } 
\label{fig:cog_vs_noncog}
\end{figure}

\subsection{Capacity Results for  Non-cognitive Symmetric  Channels}\label{sec:noncog}
Now we consider a symmetric non-cognitive many-to-one channel where $b_k=b$ and $h_k=h$ for $k\geq 1$. In \cite{Bresler_etal_2010},  an approximate capacity result is established within a gap of $(3K+3)(1+\log(K+1))$ bits per user for \textit{any} channel gain. In this section we will give refined results for the symmetric many-to-one channel. The reason we restrict ourselves to the symmetric case is that, for general channel gains the optimization problem involving the coefficient matrix $\ve A$ is analytically intractable as  discussed in Section \ref{sec:optimal_A}, hence it is also difficult to give explicit expressions for achievable rates. But for the symmetric many-to-one channel we are able to  give a constant gap result as well as a capacity result when the interference is strong. First notice that the optimal form of the coefficient matrix for the cognitive symmetric channel given in Lemma \ref{lemma:optimal_coef}  also applies in this non-cognitive symmetric setting.

\begin{theorem}
Consider a symmetric (non-cognitive) many-to-one interference channel  with $K+1$ users. If $|b|\geq|h|\ceil[\Big]{\sqrt{P}}$, then each user is less than $0.5$ bit from the capacity for any  number of users. Furthermore, if $|b|\geq \sqrt{\frac{(1+P)(1+h^2P)}{P}}$, each user can achieve the capacity, i.e., $R_0=\frac{1}{2}\log(1+P)$ and $R_k=\frac{1}{2}\log(1+h^2P)$ for all $k\geq 1$. 
\label{thm:constant_gap}
\end{theorem}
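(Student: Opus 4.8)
The plan is to establish matching converse and achievability bounds. For the converse, note each cognitive receiver $k\geq 1$ sees only the point-to-point channel $\ve y_k=h\ve x_k+\ve z_k$, so $R_k\leq\frac12\log(1+h^2P)$; and even supplying Receiver $0$ with all interfering codewords as a genie reduces its link to $\ve y_0=\ve x_0+\ve z_0$, giving $R_0\leq\frac12\log(1+P)$. These match the claimed capacities, so it remains to achieve them (or come within $0.5$ bit). Throughout I apply Theorem \ref{thm:rate_noncog} (with $\lambda_k=\gamma_k=0$) together with the two-row coefficient matrices isolated in Lemma \ref{lemma:optimal_coef}, and I let all cognitive users share one scaling $\beta_k=\beta$, so that $\ve s:=\sum_{k\geq 1}\ve t_k$ is itself a lattice point — this is the interference alignment that makes the collective interference behave like a single codeword at Receiver $0$.

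For the exact-capacity regime I would have Receiver $0$ first decode $\ve s$ using $\ve a(1)=[0,1,\dots,1]$, treating $\ve x_0$ and $\ve z_0$ as noise, and then decode $\ve t_0$ using $\ve a(2)=[1,0,\dots,0]$. Optimizing the scalar $\alpha_1$ in (\ref{eq:N_0_l_noncog}) for the first step yields a cognitive constraint $\frac12\log^+\!\big(\tfrac1K+\tfrac{b^2P}{1+P}\big)$, while for the second step the already-decoded first sum lets me cancel the interference term exactly (choosing the free scalar so the cognitive contribution vanishes), leaving $\tilde N_0(2)=\beta_0^2P/(1+P)$ and hence $R_0=\frac12\log(1+P)$. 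Imposing $\tfrac1K+\tfrac{b^2P}{1+P}\geq 1+h^2P$, which is guaranteed by $|b|\geq\sqrt{(1+P)(1+h^2P)/P}$, makes the first-step constraint no tighter than the point-to-point bound, so every cognitive user also attains $\frac12\log(1+h^2P)$. This settles the capacity claim.

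For the constant-gap regime the cheap $\ve s$-first argument is insufficient (its cognitive rate falls more than $0.5$ bit short when $P$ is small), so I would instead decode the combined sum $c_0\ve t_0+c\ve s$ first and $\ve s$ second, i.e. the matrix $\ve A_2$ of Lemma \ref{lemma:optimal_coef} with integer $c=\ceil{\sqrt P}$, recovering $\ve t_0$ from the two sums (possible since $\ve A_2\in\mathcal A(2)$). The scalings $\beta_0,\beta$ are chosen to \emph{nearly} align the interference, $c\beta\approx b$: the row carrying $\ve t_0$ then controls $R_0$ and is maximized by tight alignment, whereas the second row recovering $\ve s$ is governed by the residual mismatch $P(b-c\beta)^2$ and instead needs $c\beta$ bounded away from $b$. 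Selecting $c=\ceil{\sqrt P}$ and $|b|\geq|h|\ceil{\sqrt P}$ fixes this trade-off so that, after optimizing the $\alpha$'s, each sum-rate — and therefore both $R_0$ and $R_k$, uniformly in $K$ — lands within $\frac12\log 2=0.5$ bit of the corresponding capacity, the loss being exactly the penalty for rounding $\sqrt P$ up to an integer number of lattice levels.

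I expect the constant-gap balance to be the main obstacle. The tension is structural: perfect alignment $c\beta=b$ maximizes the combined sum's rate but makes the two decoded sums parallel, so $\ve t_0$ can no longer be separated from $\ve s$ (the second constraint collapses to $\frac12\log^+(1/K)\leq 0$); conversely, detuning $\beta$ to protect the second sum erodes $R_0$. The real work is to show that the single integer choice $c=\ceil{\sqrt P}$ with $|b|\geq|h|\ceil{\sqrt P}$ keeps all three rate expressions (both cognitive constraints and the primary one) within $0.5$ bit of capacity for every $K$, which reduces to careful $\ceil{\cdot}$ estimates on the optimized noise variances $\tilde N_0(1),\tilde N_0(2)$ of (\ref{eq:N_0_l_noncog}); by comparison, the converse and the exact-capacity computation are routine.
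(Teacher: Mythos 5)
Your converse and your exact-capacity argument are correct and coincide with the paper's proof: the paper likewise has Receiver $0$ decode $\sum_{k\geq 1}\ve t_k$ first with $\ve a(1)=[0,1,\ldots,1]$ (yielding the cognitive constraint $\frac12\log^+\bigl(\frac1K+\frac{b^2P}{1+P}\bigr)$) and then $\ve t_0$ with $\ve a(2)=[1,0,\ldots,0]$ at rate $\frac12\log(1+P)$, exactly as you describe, and your observation that this scheme alone cannot yield the $0.5$-bit claim under the weaker hypothesis $|b|\geq|h|\ceil{\sqrt P}$ is sound.

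The constant-gap half, however, contains a genuine gap, and it stems from a false structural claim. You assert that exact alignment $c\beta=b$ would make the second decoding ``collapse to $\frac12\log^+(1/K)\leq 0$,'' and on that basis you detune to $c\beta\approx b$ and then leave all of the resulting noise-variance estimates as ``the real work'' --- so this half is never actually proved. But the claim you use to justify the detour is wrong, and the paper's proof (Appendix \ref{sec:Appdx_derivation}) makes precisely the choice you reject: $\beta_0=1$, $\beta_k=b/c$ with $c=\ceil{\sqrt P}$, i.e.\ \emph{perfect} alignment. With this choice the second sum $\sum_{k\geq 1}\ve t_k$ is decodable at rate $\tilde r_k(\ve a_{2|1},\underline\beta)=\frac12\log^+\bigl(b^2+\frac1K\bigr)$ (with optimal $\alpha_1^*=\frac{-Kb^2}{c(Kb^2+1)}$ and $\alpha_2^*=0$), not $\frac12\log^+(1/K)$. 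The reason is the successive structure your argument ignores: decoding the second sum may use the already-decoded first sum, i.e.\ the known equivalent noise $\tilde{\ve z}_0(1)$. Under perfect alignment, $\tilde{\ve z}_0(1)=\alpha_1\ve z_0+(\alpha_1-1)\bigl(\ve x_0+b\sum_{k\geq 1}\ve x_k\bigr)$, so knowing $\tilde{\ve z}_0(1)$ and $\ve y_0$ is equivalent to knowing $\ve z_0$ itself; the decoder thus effectively holds the \emph{noiseless} observation $\ve x_0+b\sum_{k\geq1}\ve x_k$, and decoding $\sum_k\ve t_k$ from it leaves only $\ve x_0$ as effective noise, whence $b^2+1/K$. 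Your error conflates two different notions: recovering $\ve t_0$ from the two sums requires only the integer rank condition $\ve A\in\mathcal A(2)$ (i.e.\ $\mathrm{rank}(\ve A)=2$, $\mathrm{rank}(\ve A')=1$), which the matrix $[1,c,\ldots,c;\,0,1,\ldots,1]$ satisfies for \emph{any} $\beta$; it does not require the effective signal vectors in $\mathbb R^{K+1}$ to be non-parallel. The repair is to keep perfect alignment and compute the three constraints as the paper does, $R_0\geq\frac12\log^+ P$ and $R_k\geq\min\bigl\{\frac12\log^+\frac{b^2P}{c^2},\,\frac12\log^+ b^2,\,\frac12\log(1+h^2P)\bigr\}$, and then verify that $c=\ceil{\sqrt P}$ together with $|b|\geq|h|\ceil{\sqrt P}$ forces $R_k\geq\frac12\log^+(h^2P)$, which is within $0.5$ bit of $\frac12\log(1+h^2P)$; no ceiling-estimate trade-off analysis is needed.
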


\begin{proof}
For the symmetric non-cognitive many-to-one channel, we have the following  trivial capacity bound
\begin{IEEEeqnarray}{rCl}
R_0&\leq& \frac{1}{2}\log(1+P)\\
R_k&\leq& \frac{1}{2}\log(1+h^2P).
\end{IEEEeqnarray}

To show the constant gap result, we choose the coefficients matrix of the two sums to be
\begin{IEEEeqnarray}{rCl}
\ve A&=&\begin{pmatrix}
1 &c &\ldots &c\\
0 &1 &\ldots &1
\end{pmatrix}
\label{eq:matrix_1}
\end{IEEEeqnarray}
for some nonzero integer $c$. Furthermore we choose $\beta_0=1$ and $\beta_k=b/c$ for all $k\geq 1$. In Appendix \ref{sec:Appdx_derivation} we use  Theorem \ref{thm:rate_noncog} to show  the following rates are achievable:
\begin{IEEEeqnarray*}{rCl}
R_0&=& \frac{1}{2}\log^+P\\
R_k&=&\min\bigg\{\frac{1}{2}\log^+\frac{b^2P}{c^2}, \frac{1}{2}\log^+b^2, \frac{1}{2}\log(1+h^2P)\bigg\}.
\end{IEEEeqnarray*}
If $|b|\geq |h|\ceil[\Big]{\sqrt{P}}$, choosing $c=\ceil[\Big]{\sqrt{P}}$  will ensure $R_k\geq \frac{1}{2}\log^+ h^2P$.

Notice that for $P\leq 1$, then $\frac{1}{2}\log(1+P)\leq 0.5$ hence the claim is vacuously true.  For $P\geq 1$, we have 
\begin{IEEEeqnarray*}{rCl}
\frac{1}{2}\log(1+P)-R_0\leq  \frac{1}{2}\log\frac{1+P}{P}\leq\frac{1}{2}\log 2= 0.5 \mbox{ bit}
\end{IEEEeqnarray*}
With the same argument we have 
\begin{IEEEeqnarray}{rCl}
\frac{1}{2}\log(1+h^2P)-R_k\leq 0.5 \mbox{ bit}
\end{IEEEeqnarray}

To show the capacity result, we set $\beta_0=1$ and $\beta_k=\beta$ for all $k\geq 1$. The receiver $0$ decodes two sums with the coefficients matrix
\begin{align}
\ve A=\begin{pmatrix}
0 &1 &\ldots &1\\
1 &0 &\ldots &0
\end{pmatrix}
\label{eq:matrx_2}.
\end{align}
The achievable rates using Theorem \ref{thm:rate_noncog} is shown in Appendix \ref{sec:Appdx_derivation}  to be
\begin{IEEEeqnarray}{rCl}
R_0&=&\frac{1}{2}\log(1+P)\\
R_k&=& \min\left\{\frac{1}{2}\log\left(\frac{Pb^2}{1+P}\right),\frac{1}{2}\log(1+h^2P)\right\}.
\end{IEEEeqnarray}
The inequality
\begin{IEEEeqnarray}{rCl}
\frac{Pb^2}{1+P}&\geq& 1+h^2P
\end{IEEEeqnarray}
is satisfied if it holds that
\begin{IEEEeqnarray}{rCl}
b^2\geq \frac{(1+P)(1+h^2P)}{P}.
\end{IEEEeqnarray}
This completes the proof.
\end{proof}

Comparing to the constant gap result in \cite{Bresler_etal_2010}, our result only concerns   a special class of many-to-one channel, but gives a gap which does not depend on the number of users $K$.  
We also point out that in \cite{sridharan_capacity_2008}, a $K$-user symmetric interference channel is studied where it was shown that if the cross channel gain $h$ satisfies $|h|\geq \sqrt{\frac{(1+P)^2}{P}}$, then every user achieves the capacity $\frac{1}{2}\log(1+P)$. This result is very similar to our result obtained here and is actually obtained using the same coding technique.

\appendices

\section{Derivations in the proof of Theorem \ref{thm:rate_cog}}
\label{app:SIC}
We give the proof for the claim made in Section \ref{sec:Proof_Thm_cog} that we could form the equivalent channel
\begin{IEEEeqnarray*}{rCl}
\tilde{\ve y}_0^{(\ell)}=\tilde{\ve z}_0(\ell)+\sum_{k\geq 0}a_k(\ell)\tilde{\ve t}_k
\end{IEEEeqnarray*}
with $\tilde{\ve z}_0(\ell)$ defined in (\ref{eq:z_0_l_tilde})
when the primary decoder decodes the $\ell$-th integer sum $\sum_{k\geq 0}a_k(\ell)\tilde{\ve t}_k$ for $\ell\geq 2$.

We first show the base case for $\ell=2$. Since $\sum_{k\geq 0}a_k(1)\tilde{\ve t}_k$ is decoded, the equivalent noise $\tilde{\ve z}_0(1)$ in Eqn. (\ref{eq:z_0_1_tilde}) can be inferred from $\tilde{\ve y}_0$.  Given $\alpha_{20},\alpha_{21}$ we form the following with $\ve y_0$ in (\ref{eq:y_0}) and $\tilde{\ve z}_0(1)$
\begin{IEEEeqnarray*}{rCl}
\tilde{\ve y}_0^{(2)}&:=&\alpha_{20}\ve y_0+\alpha_{21}\tilde{\ve z}_0(1)\\
&=&(\alpha_{20}+\alpha_{21}\alpha_1)\ve z_0\\
&&+\sum_{k\geq 1}( (\alpha_{20}+\alpha_{20}\alpha_1)b_k-\alpha_{21}a_k(1)\beta_k)\hat{\ve x}_k\\
&&+((\alpha_{20}+\alpha_{21}\alpha_1)b_0-\alpha_{21}a_0(1)\beta_0-
\alpha_{21}g(1))\ve x_0\\
&=&\alpha_2'\ve z_0+\sum_{k\geq 1}(\alpha_2'b_k-\alpha_1'a_k(1)\beta_k)\hat{\ve x}_k\\
&&+(\alpha_2' b_0-\alpha_1'a_0(1)\beta_0-\alpha_1'g(1))\ve x_0
\end{IEEEeqnarray*}
by defining $\alpha_1':=\alpha_{21}$ and $\alpha_2':=\alpha_{20}+\alpha_{21}\alpha_1$. Now following the same step for deriving $\tilde{\ve y}_0^{(1)}$ in (\ref{eq:y0_tilde}), we can rewrite $\tilde{\ve y}_0^{(2)}$ as
\begin{IEEEeqnarray}{rCl}
\tilde{\ve y}_0^{(2)}=\sum_{k\geq 0}a_k(2)\tilde{\ve t}_k+\tilde{\ve z}_0(2)
\end{IEEEeqnarray}
with
\begin{IEEEeqnarray*}{rCl}
\tilde{\ve z}_0(2)&:=&\alpha_2'\ve z_0+\sum_{k\geq 1}(\alpha_2'b_k-a_k(2)\beta_k-\alpha_1'a_k(1)\beta_k)\hat{\ve x}_k\\
&&+(\alpha_2' b_0-a_0(2)\beta_0-\alpha_1'a_0(1)\beta_0-g(2))\ve x_0
\end{IEEEeqnarray*}
This establishes the base case by identifying $\alpha_i'=\alpha_i$ for $i=1, 2$.  

Now assume the expression (\ref{eq:z_0_l_tilde}) is true for $\ell-1$ ($\ell\geq 3$) and we have inferred $\tilde{\ve z}_0(m)$  from $\tilde{\ve y}_0^{(m)}$ using the decoded sum $\sum_{k\geq 0}a_k(m)\tilde{\ve t}_k$ for all $m\leq 1,\ldots,\ell-1$,  we will form $\tilde{\ve y}_0^{(\ell)}$ with $\ell$ numbers $\alpha_{\ell 0},\ldots,\alpha_{\ell \ell-1}$ as 
\begin{IEEEeqnarray*}{rCl}
\tilde{\ve y}_0^{(\ell)}&:=&\alpha_{\ell 0}\ve y_0+\sum_{m=1}^{\ell-1}\alpha_{\ell m}\tilde{\ve z}_0(m)\\
&=&\alpha'_\ell\ve z_0+\sum_{k\geq 1}\left(\alpha_\ell' b_k-\beta_k C_{\ell-1}(k)\right)\hat{\ve x}_k\\
&&+\left(\alpha'_\ell b_0-\beta_0C_{\ell-1}(0)-\sum_{m=1}^{\ell-1}\alpha_{\ell m}g(m)\right)\ve x_0
\end{IEEEeqnarray*}
with 
\begin{align}
\alpha'_\ell&:=\alpha_{\ell 0}+\sum_{m=1}^{\ell-1}\alpha_{\ell m}\alpha_m\\
C_{\ell-1}(k)&:=\sum_{m=1}^{\ell-1}\alpha_{\ell m}\left(a_k(m)+\sum_{j=1}^{m-1}\alpha_ja_k(j)\right).
\end{align}
Algebraic manipulations allow us to rewrite $C_{\ell-1}(k)$ as
\begin{align}
C_{\ell-1}(k)&=\sum_{m=1}^{\ell-1}\left(\alpha_{\ell m}+\alpha_m\sum_{j=m+1}^{\ell-1}\alpha_{\ell j}\right)a_k(m)\\
&=\sum_{m=1}^{\ell-1}\alpha'_ma_k(m)
\end{align}
by defining $\alpha'_m:=\alpha_{\ell m}+\alpha_m\sum_{j=m+1}^{\ell-1}\alpha_{\ell j}$ for $m=1,\ldots,\ell-1$. Substituting the above into $\tilde{\ve y}_0^{(\ell)}$  we get
\begin{IEEEeqnarray*}{rCl}
\tilde{\ve y}_0^{(\ell)}&=&\alpha'_\ell \ve z_0+\sum_{k\geq 1}\left(\alpha_\ell' b_k-\beta_k \sum_{m=1}^{\ell-1}\alpha'_ma_k(m)\right)\hat{\ve x}_k\\
&&+\left(\alpha'_\ell b_0-\beta_0\sum_{m=1}^{\ell-1}\alpha'_ma_0(m)-\sum_{m=1}^{\ell-1}\alpha_{\ell m}g(m)\right)\ve x_0.
\end{IEEEeqnarray*}
Together with the definition of $g(m)$ in (\ref{eq:g_l}) and some algebra we can show
\begin{align}
\sum_{m=1}^{\ell-1}a_{\ell m}g(m)&=\sum_{k=1}^{K}\gamma_kC_{\ell-1}(k)\\
&=\sum_{k=1}^K\left(\sum_{m=1}^{\ell-1}\alpha_m'a_k(m)\right) \gamma_k.
\end{align}
Finally using the same steps for deriving $\tilde{\ve y}_0^{(1)}$ in (\ref{eq:y0_tilde}) and identifying $\alpha'_m=\alpha_m$ for $m=1,\ldots,\ell$, it is easy to see that we have
\begin{IEEEeqnarray}{rCl}
\tilde{\ve y}_0^{(\ell)}&=&\sum_{k\geq 0}a_k(\ell)\tilde{\ve t}_k+\tilde{\ve z}_0(\ell)
\end{IEEEeqnarray}
with $\tilde{\ve z}_0(\ell)$ claimed in (\ref{eq:z_0_l_tilde}).

\section{Proof of Proposition \ref{prop:optimization}}\label{sec:proof_prop}
For any given  set of parameters $\{\alpha_j, j\in[1:\ell]\}$ in the expression $N_0(\ell)$ in (\ref{eq:N_0_l_compact}) , we can always find another set of parameters $\{\alpha_j', j\in[1:\ell]\}$ and a set of vectors $\{\ve u_j, j\in[1:\ell]\}$, such that
\begin{align}
\alpha_\ell\ve h+\sum_{j=1}^{\ell-1}\alpha_j\tilde{\ve a}_j=\sum_{j=1}^{\ell}\alpha_j'\ve u_j
\end{align}
as long as the two sets of vectors, $\{\ve h, \tilde{\ve a}_j, j\in[1:\ell-1]\}$ and $\{\ve u_j, j\in[1:\ell]\}$ span the same subspace. If we choose an appropriate set of basis vectors $\{\ve u_j\}$, the minimization problem of $N_0(\ell)$ can be equivalently formulated with the set $\{\ve u_j\}$ and new parameters $\{\alpha_j'\}$ where the optimal $\{\alpha_j'\}$ have simple solutions. 
Notice that $\{\ve u_j,j\in[1:\ell]\}$ in Eqn. (\ref{eq:u_j}) are obtained by performing the Gram-Schmidt procedure on the set  $\{\ve h, \tilde{\ve a}_j, j\in[1:\ell-1]\}$. Hence the set $\{\ve u_j, j\in[1:\ell]\}$  contains orthogonal vectors and spans the same subspace as the set  $\{\ve h, \tilde{\ve a}_j, j\in[1:\ell-1]\}$ does. For any $\ell\geq 1$, the expression $N_0(\ell)$ in (\ref{eq:N_0_l_compact}) can be equivalently rewritten as
\begin{align}
N_0(\ell)=\alpha_{\ell}'^2+\norm{\sum_{j=1}^{\ell}\alpha_j'\ve u_j-\tilde {\ve a}_{\ell}}^2P
\label{eq:N_0_l_newbasis}
\end{align}
with  $\{\ve u_j\}$  defined above and some $\{\alpha_j'\}$.   Due to the orthogonality of vectors $\{\ve u_j\}$, we have the following simple optimal solutions for $\{\alpha_j'^{*}\}$ which minimize $N_0(\ell)$:
\begin{IEEEeqnarray}{rCl}
\alpha_j'^{*}&=&\frac{\tilde{\ve a}_\ell^T\ve u_j}{\norm{\ve u_j}^2},\quad j\in[1:\ell-1]\\
\alpha_\ell'^{*}&=&\frac{P\tilde{\ve a}_{\ell}^T\ve u_\ell}{P\norm{\ve u_{\ell}}^2+1}.
\end{IEEEeqnarray}
Substituting them back to $N_0(\ell)$ in (\ref{eq:N_0_l_newbasis}) we have
\begin{IEEEeqnarray*}{rCl}
N_0(\ell)&=&P\norm{\tilde{\ve a}_\ell}^2-\sum_{j=1}^{\ell-1}\frac{(\tilde{\ve a}_\ell^T\ve u_j)^2P}{\norm{\ve u_j}^2}-\frac{P^2(\ve u_\ell^T\tilde{\ve a}_\ell)^2}{1+P\norm{\ve u_\ell}^2}\\
&=&P\tilde{\ve a}_\ell^T\left(\ve I-\sum_{i=1}^{\ell-1}\frac{\ve u_j\ve u_j^T}{\norm{\ve u_j}^2}-\frac{(\ve u_\ell\ve u_\ell^T)P}{1+P\norm{\ve u_\ell}^2}\right)\tilde{\ve a}_\ell\\
&=&P\ve a(\ell)^T\ve B_{\ell}\ve a(\ell)
\end{IEEEeqnarray*}
with $\ve B_\ell$ given in (\ref{eq:matrix_B}). As we discussed before, maximizing $r_k(\ve a_{\ell|1:\ell-1})$ is equivalent to minimizing $N_0(\ell)$ and the optimal coefficients $\ve a(\ell), \ell\in[1:L]$ are the same for all users. This proves the claim.

\section{An outer bound on the capacity region} \label{sec:outer_bound}
In this section we give a simple outer bound on the capacity region of the cognitive many-to-one channel, which is used for the numerical evaluation in Figure \ref{fig:rate_region}, Section \ref{sec:Symmetric}. Notice that if we allow all  transmitters $k=0,\ldots,K$ to cooperate, and allow the cognitive receivers $k=1,\ldots, K$ to cooperate, then the system can be seen as a $2$-user broadcast channel  where the transmitter has $K+1$ antennas.  The two users are the primary receiver and the aggregation of all cognitive receivers with $K$ antennas.  Obviously the capacity region of this resulting $2$-user MIMO broadcast channel will be a valid outer bound on the capacity region of the cognitive many-to-one channel. The capacity region $\mathcal C_{BC}$ of the broadcast channel is given by (see \cite[Ch. 9]{Elgamal_Kim_2011} for example)
\begin{IEEEeqnarray}{rCl}
\mathcal C_{BC}=\mathcal R_1\bigcup\mathcal R_2
\end{IEEEeqnarray}
where $\mathcal R_1$ is defined as
\begin{IEEEeqnarray}{rCl}
R_1&\leq& \frac{1}{2}\log\frac{|\ve H_1(\ve K_1+\ve K_2)\ve H_1^T+\ve I|}{|\ve H_1\ve K_2\ve H_2^T+\ve I|}\\
R_2&\leq& \frac{1}{2}\log|\ve H_2\ve K_2\ve G_2^T+\ve I|
\end{IEEEeqnarray}
and $\mathcal R_2$ defined similarly with all subscripts $1$ and $2$ in $\mathcal R_1$ swapped. The channel matrices $\ve H_1\in\mathbb R^{1\times (K+1)}$ and $\ve H_2\in\mathbb R^{K\times (K+1)}$ are defined as
\begin{IEEEeqnarray}{rCl}
\ve H_1&=&\begin{bmatrix}
1 &b_1 &\ldots &b_K
\end{bmatrix}\\
\ve H_2&=&
\begin{bmatrix}
0 &h_1 &0 &\ldots &0\\
0 &0 &h_2 &\ldots &0\\
\vdots &\vdots &\vdots &\ddots &\vdots \\
0 &0 &0 &\ldots &h_K
\end{bmatrix}
\end{IEEEeqnarray}
where $\ve H_1$ denotes the channel from the aggregated transmitters to the primary receiver and $\ve H_2$ denotes the channel to all cognitive receivers. The variables $\ve K_1,\ve K_2\in\mathbb R^{(K+1)\times (K+1)}$ should satisfy the condition
\begin{IEEEeqnarray}{rCl}
 \mbox{tr}(\ve K_1+\ve K_2)\leq (K+1)P
\end{IEEEeqnarray}
which represents the power constraint for the corresponding broadcast channel\footnote{Since each transmitter has its individual power constraint, we could give a slightly tighter outer bound by imposing a per-antenna power constraint. Namely the matrices $\ve K_1, \ve K_2$ should satisfy $(\ve K_1+\ve K_2)_{ii}\leq P$ for $i\in[1:K+1]$ where $(\ve X)_{ii}$ denotes the $(i,i)$ entry of  matrix $\ve X$. However this is not the focus of this paper and we will not pursue it here.}. As explained in \cite[Ch. 9]{Elgamal_Kim_2011}, the problem of finding the region $\mathcal C_{BC}$ can be rewritten as convex optimization problems which are readily solvable using standard convex optimization tools.

\section{Derivations in the proof of Theorem \ref{thm:constant_gap}}\label{sec:Appdx_derivation}
We give detailed derivations of the achievable rates in Theorem \ref{thm:constant_gap} with two chosen coefficient matrices. 

When the primary user decodes the first equation ($\ell=1$) in a symmetric channel, the expression (\ref{eq:N_0_l_noncog})  for  the variance of the equivalent noise simplifies to (denoting $\beta_k=\beta$ for $k\geq 1$)
\begin{align}
\tilde N_0(1)=\bar\alpha_1^2+K(\bar\alpha_1b-a_k(1)\beta)^2P+(\bar\alpha_1-a_0(1)\beta_0)^2P.
\label{eq:N_0_1_example}
\end{align}
For decoding the second integer sum, the variance of the equivalent noise (\ref{eq:N_0_l_noncog}) is given as
\begin{IEEEeqnarray}{rCl}
\tilde N_0(2)&=&\alpha_2^2+K(\alpha_2b-a_k(2)\beta-\alpha_1a_k(1)\beta)^2P\nonumber\\
&&+(\alpha_2-a_0(2)\beta_0-\alpha_1a_0(1)\beta_0)^2P.
\label{eq:N_0_2_example}
\end{IEEEeqnarray}

We first evaluate the achievable rate for the coefficient matrix in (\ref{eq:matrix_1}). We choose $\beta_0=1$ and  $\beta=b/c$. Using Theorem \ref{thm:rate_noncog},  substituting $\ve a(1)=[1,c,\ldots,c]$ and the optimal $\bar\alpha_1^*=1-\frac{1}{P(Kb^2+1)}$ into (\ref{eq:N_0_1_example}) will give us a rate constraint on $R_0$
\begin{IEEEeqnarray*}{rCl}
\tilde r_0(\ve a_1,\underline{\beta})&=&\frac{1}{2}\log^+\left(\frac{1}{1+Kb^2}+P\right)>\frac{1}{2}\log^+ P\\
\tilde r_k(\ve a_1,\underline{\beta})&=&\frac{1}{2}\log^+\left(\frac{b^2P(Kb^2P+P+1)}{c^2(Kb^2P+P)}\right)>\frac{1}{2}\log^+\frac{b^2P}{c^2}.
\end{IEEEeqnarray*}
Notice here we have replaced the achievable rates with smaller values to make the result simple. We will do the same in the following derivation.

For decoding the second sum with coefficients $\ve a(2)=[0,1,\ldots,1]$, we use  Theorem \ref{thm:rate_noncog} and (\ref{eq:N_0_2_example})  to obtain rate constraints for $R_k$
\begin{IEEEeqnarray}{rCl}
\tilde r_k(\ve a_{2|1},\underline{\beta})=\frac{1}{2}\log^+\left(b^2+\frac{1}{K}\right)> \frac{1}{2}\log^+b^2
\end{IEEEeqnarray}
with the optimal $\alpha_1^*=\frac{-b^2K}{c(Kb^2+1)}$ and $\alpha_2^*=0$.  Notice that  $\ve a_0(1)=0$  hence decoding  this sum will not impose any rate constraint on $R_0$. Therefore we omit the expression $\tilde r_0(\ve a_{2|1},\underline{\beta})$. Combining the results above with Theorem \ref{thm:rate_noncog} we get the claimed rates in the proof of Theorem \ref{thm:constant_gap}.

Now we evaluate the achievable rate for the  coefficient matrix in (\ref{eq:matrx_2}). We substitute $\beta_0=1$, $\beta_k=\beta$ for any $\beta$ and $\ve a(1)=[0,1,\ldots,1]$ in (\ref{eq:N_0_1_example}) with the optimal $\bar\alpha_1^*=\frac{Kb\beta p}{Kb^2P+P+1}$. Notice again $R_0$ is not constrained by decoding this sum hence we only have the constraint on $R_k$ as 
\begin{IEEEeqnarray*}{rCl}
\tilde r_k(\ve a_1,\underline{\beta})&=&\frac{1}{2}\log^+\left(\frac{1}{K}+\frac{P}{1+P}b^2\right)>\frac{1}{2}\log^+\frac{Pb^2}{1+P}.
\end{IEEEeqnarray*}
For the second decoding, using $\ve a(2)=[1,0,\ldots,0]$ in (\ref{eq:N_0_2_example}) gives  
\begin{IEEEeqnarray}{rCl}
\tilde r_0(\ve a_{2|1},\underline{\beta})=\frac{1}{2}\log\left(1+P\right)
\end{IEEEeqnarray}
with the optimal scaling factors $\alpha_1^*=\frac{bP}{\beta(P+1)}$ and $\alpha_2^*=\frac{P}{P+1}$. Combining the achievable rates above with  Theorem \ref{thm:rate_noncog}  gives the claimed result.


\section*{Acknowledgment}

The authors would like to thank the anonymous reviewers
for their valuable comments and suggestions to improve the
quality of the paper.

\ifCLASSOPTIONcaptionsoff
  \newpage
\fi

\bibliographystyle{IEEEtran}
\bibliography{IEEEabrv,ZhuGastpar}

\begin{thebibliography}{10}
\providecommand{\url}[1]{#1}
\csname url@samestyle\endcsname
\providecommand{\newblock}{\relax}
\providecommand{\bibinfo}[2]{#2}
\providecommand{\BIBentrySTDinterwordspacing}{\spaceskip=0pt\relax}
\providecommand{\BIBentryALTinterwordstretchfactor}{4}
\providecommand{\BIBentryALTinterwordspacing}{\spaceskip=\fontdimen2\font plus
\BIBentryALTinterwordstretchfactor\fontdimen3\font minus
  \fontdimen4\font\relax}
\providecommand{\BIBforeignlanguage}[2]{{%
\expandafter\ifx\csname l@#1\endcsname\relax
\typeout{** WARNING: IEEEtran.bst: No hyphenation pattern has been}%
\typeout{** loaded for the language `#1'. Using the pattern for}%
\typeout{** the default language instead.}%
\else
\language=\csname l@#1\endcsname
\fi
#2}}
\providecommand{\BIBdecl}{\relax}
\BIBdecl

\bibitem{Devroye_etal_2006}
N.~Devroye, P.~Mitran, and V.~Tarokh, ``Achievable rates in cognitive radio
  channels,'' \emph{Information Theory, {IEEE} Transactions on}, vol.~52, pp.
  1813--1827, May 2006.

\bibitem{maric_capacity_2008}
I.~Mari\'c, A.~Goldsmith, G.~Kramer, and S.~Shamai~(Shitz), ``On the capacity
  of interference channels with one cooperating transmitter,'' \emph{European
  Transactions on Telecommunications}, vol.~19, pp. 405--420, 2008.

\bibitem{jovicic_cognitive_2009}
A.~Jovicic and P.~Viswanath, ``Cognitive radio: An information-theoretic
  perspective,'' \emph{Information Theory, {IEEE} Transactions on}, vol.~55,
  pp. 3945--3958, 2009.

\bibitem{Rini_etal_2012}
S.~Rini, D.~Tuninetti, and N.~Devroye, ``Inner and outer bounds for the
  {G}aussian cognitive interference channel and new capacity results,''
  \emph{Information Theory, {IEEE} Transactions on}, vol.~58, pp. 820--848,
  2012.

\bibitem{NazerGastpar_2011}
B.~Nazer and M.~Gastpar, ``Compute-and-forward: Harnessing interference through
  structured codes,'' \emph{Information Theory, {IEEE} Transactions on},
  vol.~57, pp. 6463--6486, 2011.

\bibitem{ZhuGastpar_2014}
J.~Zhu and M.~Gastpar, ``Asymmetric compute-and-forward with {CSIT},'' in
  \emph{International Zurich Seminar on Communications}, 2014.

\bibitem{nagananda_multiuser_2013}
K.~G. Nagananda, P.~Mohapatra, C.~R. Murthy, and S.~Kishore, ``Multiuser
  cognitive radio networks: an information-theoretic perspective,''
  \emph{International Journal of Advances in Engineering Sciences and Applied
  Mathematics}, vol.~5, no.~1, pp. 43--65, Mar. 2013.

\bibitem{maamari_approximate_2014}
D.~Maamari, D.~Tuninetti, and N.~Devroye, ``Approximate sum-capacity of k-user
  cognitive interference channels with cumulative message sharing,''
  \emph{{IEEE} Journal on Selected Areas in Communications}, vol.~32, no.~3,
  pp. 654--666, Mar. 2014.

\bibitem{Bresler_etal_2010}
G.~Bresler, A.~Parekh, and D.~N.~C. Tse, ``The approximate capacity of the
  many-to-one and one-to-many {G}aussian interference channels,''
  \emph{Information Theory, {IEEE} Transactions on}, vol.~56, pp. 4566--4592,
  2010.

\bibitem{Nam_etal_2010}
W.~Nam, S.-Y. Chung, and Y.~H. Lee, ``Capacity of the {G}aussian two-way relay
  channel to within 1/2 bit,'' \emph{Information Theory, {IEEE} Transactions
  on}, vol.~56, no.~11, pp. 5488--5494, 2010.

\bibitem{wilson_joint_2010}
M.~Wilson, K.~Narayanan, H.~Pfister, and A.~Sprintson, ``Joint physical layer
  coding and network coding for bidirectional relaying,'' \emph{Information
  Theory, {IEEE} Transactions on}, vol.~56, no.~11, 2010.

\bibitem{ordentlich_approximate_2012}
O.~Ordentlich, U.~Erez, and B.~Nazer, ``The approximate sum capacity of the
  symmetric {G}aussian {K}-user interference channel,'' \emph{{arXiv:1206.0197}
  [cs, math]}, Jun. 2012.

\bibitem{Zhan_etal_2010}
\BIBentryALTinterwordspacing
J.~Zhan, B.~Nazer, U.~Erez, and M.~Gastpar, ``Integer-forcing linear
  receivers,'' {arXiv} e-print, Mar. 2010. [Online]. Available:
  \url{http://arxiv.org/abs/1003.5966}
\BIBentrySTDinterwordspacing

\bibitem{wu_capacity_2007}
W.~Wu, S.~Vishwanath, and A.~Arapostathis, ``Capacity of a class of cognitive
  radio channels: Interference channels with degraded message sets,''
  \emph{Information Theory, {IEEE} Transactions on}, vol.~53, pp. 4391--4399,
  2007.

\bibitem{Elgamal_Kim_2011}
A.~El~Gamal and Y.~H. Kim, \emph{Network information theory}.\hskip 1em plus
  0.5em minus 0.4em\relax Cambridge University Press, 2011.

\bibitem{Erez_etal_2005}
U.~Erez, S.~Litsyn, and R.~Zamir, ``Lattices which are good for (almost)
  everything,'' \emph{Information Theory, {IEEE} Transactions on}, vol.~51, pp.
  3401--3416, 2005.

\bibitem{ErezZamir_2004}
U.~Erez and R.~Zamir, ``Achieving 1/2 log (1+ {SNR)} on the {AWGN} channel with
  lattice encoding and decoding,'' \emph{Information Theory, {IEEE}
  Transactions on}, vol.~50, pp. 2293--2314, 2004.

\bibitem{poltyrev_coding_1994}
G.~Poltyrev, ``On coding without restrictions for the {AWGN} channel,''
  \emph{Information Theory, {IEEE} Transactions on}, vol.~40, pp. 409--417,
  1994.

\bibitem{Nam_etal_2011}
W.~Nam, S.~Y. Chung, and Y.~H. Lee, ``Nested lattice codes for {G}aussian relay
  networks with interference,'' \emph{Information Theory, {IEEE} Transactions
  on}, vol.~57, pp. 7733--7745, 2011.

\bibitem{zamir_nested_2002}
R.~Zamir, S.~Shamai, and U.~Erez, ``Nested linear/lattice codes for structured
  multiterminal binning,'' \emph{Information Theory, {IEEE} Transactions on},
  vol.~48, no.~6, pp. 1250--1276, 2002.

\bibitem{Nazer_2012}
B.~Nazer, ``Successive compute-and-forward,'' in \emph{International Zurich
  Seminar on Communications}, 2012, p. 103.

\bibitem{bidokhti_nonunqie_2012}
S.~S. Bidokhti, V.~M. Prabhakaran, and S.~Diggavi, ``Is non-unique decoding
  necessary?'' in \emph{Information Theory Proceedings ({ISIT)}, 2012 {IEEE}
  International Symposium on}, 2012, p. 398–402.

\bibitem{SahraeiGastpar_2014}
S.~Sahraei and M.~Gastpar, ``Compute-and-forward: Finding the best equation,''
  in \emph{52nd Annual Allerton Conference on Communication, Control, and
  Computing, Champaign, Illinois, USA}, 2014.

\bibitem{sridharan_capacity_2008}
S.~Sridharan, A.~Jafarian, S.~Vishwanath, and S.~Jafar, ``Capacity of symmetric
  {K}-user {G}aussian very strong interference channels,'' in \emph{{IEEE}
  Global Telecommunications Conference}, 2008.

\end{thebibliography}


%

\begin{IEEEbiographynophoto}{Jingge Zhu}
is a Ph.D. student in the  School of Computer and Communication Sciences, Ecole Polytechnique F{\'e}d{\'e}rale de Lausanne (EPFL), Lausanne, Switzerland. He received the B.S. degree and M.S. degree in electrical engineering from Shanghai Jiao Tong University, Shanghai, China, in 2008 and 2011, respectively. He also received the Dipl.-Ing. degree in technische Informatik from Technische Universit{\"a}t Berlin, Berlin, Germany in 2011.  His research interests include  information theory with applications in communication systems.

Mr. Zhu is the recipient of the IEEE Heinrich Hertz Award for Best
Communications Letters in 2013.
\end{IEEEbiographynophoto}

\begin{IEEEbiographynophoto}{Michael Gastpar}
received the Dipl. El.-Ing. degree from ETH Z\"urich, in 1997, the M.S. degree from the University of Illinois at Urbana-Champaign, Urbana, IL, in 1999, and the
Doctorat \`es Science degree from Ecole Polytechnique F\'ed\'erale (EPFL), Lausanne, Switzerland, in 2002, all in electrical engineering. He was also a student in engineering and philosophy at the Universities of Edinburgh and Lausanne.

He is a Professor in the School of Computer and Communication
Sciences, Ecole Polytechnique F\'ed\'erale (EPFL), Lausanne, Switzerland.
He held tenured professor positions at the University of California, Berkeley,
and at Delft University of Technology, The Netherlands.
He was a Researcher with the Mathematics of Communications Department,
Bell Labs, Lucent Technologies, Murray Hill, NJ.
His research interests are
in network information theory and related coding and signal processing techniques,
with applications to sensor networks and neuroscience.

Dr. Gastpar received the 2002 EPFL Best Thesis Award, the NSF CAREER Award
in 2004, the Okawa Foundation Research Grant in 2008, the ERC Starting Grant in 2010,
and the IEEE Communications Society and Information Theory Society Joint Paper Award
in 2013. He was an Information Theory Society Distinguished Lecturer (2009-2011), an Associate Editor for Shannon Theory for the IEEE TRANSACTIONS ON INFORMATION
THEORY (2008-2011), and he has served as Technical Program Committee Co-Chair for the
2010 International Symposium on Information Theory, Austin, TX.
\end{IEEEbiographynophoto}





\end{document}